\newtheorem{lemma}{Lemma}
\newtheorem{corollary}{Corollary}
\newtheorem{definition}{Definition}
\newtheorem{theorem}{Theorem}
\newtheorem{proposition}{Proposition}
\newtheorem{assump}{Assumption}
\newtheorem{remark}{Remark}
\newcommand{\pp}{\hat{\phi}}
\newcommand{\sat}{\ensuremath{ {\rm sat}}}
\newcommand{\unsat}{\ensuremath{{\rm unsat}}}
\newcommand{\E}{\ensuremath{\mathbb{E}}}
\newcommand{\newhard}{F_h}
\newcommand{\oldhard}{{\cal C}_h(F)}
\newcommand{\newsoft}{F_s}
\newcommand{\oldsoft}{{\cal C}(F) \setminus {\cal C}_h(F)}
\newcommand{\oldunitvar}{V_u(F)}
\newcommand{\newunitvar}{V_1}
\newcommand{\oldtwovar}{V_{h \setminus u}(F)}
\newcommand{\newtwovar}{V_2}
\newcommand{\oldsoftvar}{V_{\bar{h}}(F)}
\newcommand{\newsoftvar}{V_s}
\newif\ifdominik
\newcommand{\hard}{\ifdominik \newhard \else \oldhard \fi}
\newcommand{\soft}{\ifdominik \newsoft \else \oldsoft \fi}
\newcommand{\unitvar}{\ifdominik \newunitvar \else \oldunitvar \fi}
\newcommand{\twovar}{\ifdominik \newtwovar \else \oldtwovar \fi}
\newcommand{\softvar}{\ifdominik \newsoftvar \else \oldsoftvar \fi}
\newcommand{\hardvars}{ |\unitvar| + |\twovar|}
\begin{document}

\title{A New Bound for $3$-Satisfiable MaxSat and its Algorithmic Application\footnote{A preliminary version of this paper has appeared in \cite{GJY}. The most significant difference between this version and \cite{GJY} is a new proof of Theorem \ref{main_bound}, which is shorter and simpler than the proof in \cite{GJY}.}}

\author
{
G. Gutin, M. Jones\\
{\small Department of Computer Science}\\[-3pt]
{\small Royal Holloway, University of London}\\[-3pt]
{\small Egham, Surrey, TW20 0EX, UK}\\[-3pt]
{\small \url{{gutin|markj}@cs.rhul.ac.uk}}
\and D. Scheder\\
{\small Department of Computer Science}\\[-3pt]
{\small Aarhus University}\\[-3pt]
{\small DK-8000 Aarhus C, Denmark}\\[-3pt]
{\small \url{dominik.scheder@gmail.com}}
\and  A. Yeo\\
{\small Department of Mathematics}\\[-3pt]
{\small University of Johannesburg}\\[-3pt]
{\small Auckland Park, 2006 South Africa}\\[-3pt]
{\small \url{anders.yeo.work@gmail.com}}
}

\date{}
\maketitle

\begin{abstract}

Let $F$ be a CNF formula with $n$ variables and $m$ clauses. $F$ is 3-satisfiable if for any 3 clauses in $F$,
there is a truth assignment which satisfies all of them. Lieberherr and Specker (1982) and, later, Yannakakis  (1994) proved
that in each 3-satisfiable CNF formula at least $\frac{2}{3}$ of its clauses can be satisfied by a truth assignment.
We improve this result by showing that every 3-satisfiable CNF formula $F$ contains a subset of variables $U$,
such that some truth assignment $\tau$ will satisfy at least $\frac{2}{3}m+\frac{1}{3}m_U+\rho n'$ clauses, where $m$ is the number of
clauses of $F$, $m_U$ is the number of clauses of $F$ containing a variable from $U$, $n'$ is the total number of variables in
clauses not containing a variable in $U$, and $\rho$ is a positive absolute constant. Both $U$ and $\tau$
can be found in polynomial time.

We use our result to show that the following parameterized problem is fixed-parameter tractable and, moreover, has a kernel with a linear number of variables. In {\sc $3$-S-MaxSat-AE}, we are given a $3$-satisfiable CNF formula $F$ with $m$ clauses and asked to determine whether there is an assignment which satisfies at least $\frac{2}{3}m + k$ clauses, where $k$ is the parameter.
\end{abstract}

\pagenumbering{arabic}
\pagestyle{plain}

\section{Introduction}

We consider a formula $F$ in conjunctive normal form (CNF) as a set of clauses: $F=\{C_1,\ldots, C_m\}$. Each clause $C_i$ has an associated positive integral weight $w(C_i)$, and we let $w(F)$ denote the total weight of clauses of $F$.
(We use weighted clauses rather than letting $F$ be a multiset of clauses; see a remark in the end of this section for a discussion on the matter.)
In what follows, we assume that no clause contains both a variable and its negation, and no clause is empty. The set of variables of $F$ will be denoted by $V(F)$.
For a truth assignment $\tau$, let ${\rm sat}_{\tau}(F)$ be the total weight of clauses of $F$ satisfied by $\tau$ and let
${\rm sat}(F)$ be the maximum total weight of clauses of $F$ that can be satisfied by a truth assignment.

For any integer $t$, we say $F$ is \emph{$t$-satisfiable} if for any $t$ clauses in $F$ there exists a truth
assignment that satisfies all of them. Thus, every CNF formula $F$ is 1-satisfiable;
if $F$ is $2$-satisfiable then $F$ contains no pair of clauses of
the form $\{x\}, \{\bar{x}\}$;  if $F$ is $3$-satisfiable then the forbidden sets of clauses are pairs of
the form $\{x\}, \{\bar{x}\}$ and triplets of the form $\{x\}, \{y\},
\{\bar{x}, \bar{y}\}$ or $\{x\}, \{\bar{x}, y\}, \{\bar{x}, \bar{y}\}$, as
well as any triplets that can be derived from these by switching positive
literals with negative literals.

It is well-known that for any $1$-satisfiable CNF formula $F$, ${\rm sat}(F) \ge \frac{1}{2}w(F)$. Lieberherr and Specker \cite{LieberherrSpecker81,LieberherrSpecker82} and, later, Yannakakis  \cite{Yannakakis94} proved the following:
if $F$ is $2$-satisfiable then ${\rm sat}(F) \ge \pp w(F)$ (where $\pp \approx 0.61803$ is the positive root of $x^2+x = 1$);
if $F$ is $3$-satisfiable then ${\rm sat}(F) \ge \frac{2}{3}w(F)$.
These bounds are asymptotically tight (that is, for any $\epsilon > 0$, there exists a $3$-satisfiable CNF formula $F$ such that ${\rm sat}(F) < (\frac{2}{3} + \epsilon)w(F)$  and similar inequalities hold for $1$-satisfiable and  $2$-satisfiable formulas).

Crowston et al. \cite{CroGutJonYeo} strengthened the bound ${\rm sat}(F) \ge \pp w(F)$ for $2$-satisfiable CNF formulas to ${\rm sat}(F) \ge \pp w(F) + \gamma |V(F)|$ (where $\gamma \approx 0.072949$) using deterministic combinatorial arguments. In this paper, we strengthen the bound ${\rm sat}(F) \ge \frac{2}{3}w(F)$ for $3$-satisfiable CNF formulas. The deterministic approach of Crowston et al. \cite{CroGutJonYeo} cannot be readily extended to the 3-satisfiability case (which appears to be more complicated) and we use probabilistic arguments instead.

Our main results on $3$-satisfiable CNF formulas are as follows. A CNF formula $F$ is {\em expanding} if for every subset $X$ of the variables of $F$, the total weight of clauses containing variables of $X$ is not smaller than $|X|.$ We show that there is a positive absolute constant $\rho$ such that for every expanding $3$-satisfiable CNF formula $F$, we have \begin{equation}\label{expeq}{\rm sat}(F) \ge \frac{2}{3}w(F)+\rho|V(F)|.\end{equation} Using (\ref{expeq}) and a result on autarkies (defined in the next section) we obtain that there is a positive absolute constant $\rho$ such that for every $3$-satisfiable CNF formula $F$ we can find, in polynomial time, a subset $U$ of $V(F)$ and a truth assignment $\tau$ for which
\begin{equation}\label{3eq}{\rm sat}_{\tau}(F)\ge \frac{2}{3}w(F)+\frac{1}{3}w(F_U)+\rho |V(F\setminus F_U)|,\end{equation}
where $F_U$ is the subset of $F$ consisting of all clauses with a variable of $U$.
Note that (\ref{3eq}) improves the bound ${\rm sat}(F) \ge \frac{2}{3}w(F)$ for $3$-satisfiable formulas. Bound (\ref{3eq}) has an application in parameterized
algorithmics as described below.

Mahajan and Raman \cite{Mahajan99} considered the following parameterized problem {\sc SAT-AE}\footnote{AE stands for
Above Expectation}: we are given a
($1$-satisfiable) CNF formula $F$ and asked to determine whether there is an assignment which satisfies at least
$\frac{1}{2}w(F) + k$ clauses, where $k$ is the parameter.
(Basic notions on parameterized algorithms and complexity are given in Section \ref{sec:n}.)
For {\sc SAT-AE}, Mahajan and Raman \cite{Mahajan99} obtained a kernel with at most $6k+3$ variables and $10k$ clauses.
Crowston et al. \cite{CroGutJonYeo} improved this to $4k$ variables and $(2\sqrt{5}+4)k$ clauses.

As mentioned above Crowston et al. \cite{CroGutJonYeo} obtained the bound ${\rm sat}(F) \ge \pp w(F) + \gamma |V(F)|$ for $2$-satisfiable CNF formula $F$. This bound allowed them to solve an open problem of Mahajan and Raman \cite{Mahajan99} by proving
that the following parameterized problem is fixed-parameter tractable and, moreover, has a kernel with a linear number of variables. In {\sc $2$-S-MaxSat-AE}, we are given a $2$-satisfiable CNF formula $F$ and asked to determine whether there is an assignment which satisfies at least $\pp w(F) + k$ clauses, where $k$ is the parameter.

Bound (\ref{3eq}) allows us to prove that the following parameterized problem is fixed-parameter tractable and, moreover, has a kernel with a linear number of variables. In {\sc $3$-S-MaxSat-AE}, we are given a $3$-satisfiable CNF formula $F$ and asked to determine whether there is an assignment which satisfies at least $\frac{2}{3} w(F) + k$ clauses, where $k$ is the parameter. This answers a question from \cite{CroGutJonYeo}.

A parameterization of {\sc Max-$r$-SAT} above a tight lower bound was recently studied in \cite{alonALG,CroGutJon,CroGutJonKimRuz,KimWil}. Approaches used there are completely different from the one used in this paper.

Our paper is organized as follows.
In Section \ref{sec:n}, we provide additional terminology and notation.
In Section \ref{sec:mr}, we describe main results of the paper and also prove that {\sc $3$-S-MaxSat-AE} is fixed parameter tractable, and has a kernel with a linear number of variables. In the next two sections, we prove our main techical results that imply (\ref{expeq}) and (\ref{3eq}).
Finally, in Section \ref{sec:d} we state two open problems on $t$-satisfiable CNF formulas for any $t.$

\vspace{3mm}

\begin{remark}
{\rm Instead of assuming $F$ to be a set of clauses and having integral weights on the clauses, we could have allowed $F$ to be a multiset and the clauses to be unweighted, with each clause possibly appearing multiple times. In our formulation, the weight of a clause corresponds to how many times it would appear in $F$ in the unweighted formulation. We use the weighted formulation for convenience. Note however that the weighted formulation is a more efficient method of expressing a formula. For a problem using the unweighted formulation, the input size will in general be larger than for the equivalent instance of the problem using weighted formulation.
This is because rather than encoding a clause together with an integer  $w$, we have to encode the same clause $w$ times.
Therefore when we obtain an algorithm which is polynomial in the input size for the weighted formulation, this is a stronger
result than if we had an algorithm which is polynomial in the input size for the unweighted version.}
\end{remark}

\section{Preliminaries}\label{sec:n}

For a clause $C$, we let $V(C)$ be the set of variables such that $x \in V(C)$ if $x \in C$ or $\bar{x} \in C$.
We assume that every clause $C$ appears only once in $F$.
If at any stage we have two clauses $C_1,C_2$ containing exactly the same literals, we remove one of them, say $C_2$, and add the weight $w(C_2)$ to
$w(C_1)$. In what follows, we will make the following assumption, without loss of generality, for a $3$-satisfiable CNF formula $F$.

\begin{assump}\label{assump1}
All unit clauses in $F$ are of the form $\{x\}$, where $x\in V(F)$.
\end{assump}

Indeed, suppose that $\{\bar{x}\} \in F$, then $\{x\}\notin F$ as $F$ is $3$-satisfiable.
Thus, we may replace $\bar{x}$ by $x$ and $x$ by $\bar{x}$ in all clauses of $F$ without changing ${\rm sat}(F)$.
%
%

\begin{definition}\label{def:partsOfF}
Let $F$ be a $3$-satisfiable CNF formula. We partition $F$ and the variable set $V(F)$ as follows.
\begin{description}
   \item[$F_1$] denotes the set of unit clauses of $F$.
   \item[$\unitvar := V(F_1)$] denotes the set of all variables appearing in unit clauses, called {\em unit variables}.
   \item[$F_2$] denotes the set of all clauses of the form
        $\{\bar{x}, y\}$ or $\{\bar{x}, \bar{y}\}$, where
        $x \in \unitvar$ and $y \not \in \unitvar$.
   \item[$\twovar := V(F_2) \setminus \unitvar$] is the set of non-unit variables in $F_2$.
   \item[$\hard := F_1 \cup F_2$] is the set of {\em hard clauses}.
   \item[$\soft := F \setminus \hard$] is the set of {\em soft clauses}, those that are not hard.
   \item[$\softvar := V(F) \setminus (V_1 \cup V_2)$] are the variables not appearing in any hard clause.
\end{description}
\end{definition}

\begin{definition}\label{def:fat} We say that $F$ is {\em fat} if
$w(\soft) \geq \frac{18}{133} (\hardvars)$. We say that $F$
is {\em hard} if $F = \hard$.
\end{definition}

\begin{remark}
{\rm The hard clauses are called hard because they make it difficult to strengthen the bound $\sat(F) \geq \frac{2}{3}w(F)$. Indeed, if the weight of  soft clauses is significant, one can improve it: in section \ref{fat_proof} we will show that $\sat(F) \geq \frac{2}{3}w(F) + \frac{1}{27} w(\soft)$. The main technical part of this paper deals with proving a better lower bound when $F$ is hard.}
\end{remark}

Let $F$ be a CNF formula.
If $F'$ is a subset of $F$ then $F\setminus F'$ denotes the formula obtained from $F$ by deleting all clauses of $F'$. Let $X$ be a subset of the variables of $F$. Recall that $F_X$ denotes the subset $F$ consisting of all clauses containing a variable from $X.$ Also recall that a CNF formula $F$ is {\em expanding} if $|X|\le w(F_X)$ for each $X\subseteq V(F)$.

A {\em truth assignment} is a function $\alpha : V(F) \rightarrow \{ \textsc{true, false} \}$.
A truth assignment $\alpha$ \emph{satisfies} a clause $C$ if there exists $x \in V(F)$ such that $x \in C$ and
$\alpha(x)=$ \textsc{true}, or $\bar{x} \in C$ and $\alpha(x)=$  \textsc{false}.
 We will denote, by ${\rm sat}_\alpha(F)$, the sum of the weights of clauses in $F$ satisfied by $\alpha$. We denote the maximum value of ${\rm sat}_\alpha(F)$ over all $\alpha$ by ${\rm sat}(F)$.

A function $\beta :\ U \rightarrow \{ \textsc{true, false} \}$, where $U$ is a subset of $V(F)$, is called a {\em partial truth assignment}.
A partial truth assignment $\beta : U \rightarrow \{ \textsc{true, false} \}$ is an {\em autarky} if $\beta$ satisfies all clauses of $F_U$.
Autarkies are of interest, in particular, due to the following simple fact. 

\begin{lemma}\label{lem:aut}\cite{CroGutJonYeo}
 Let $\beta : U \rightarrow \{ \textsc{true, false} \}$ be an autarky for a CNF formula $F$ and let $\gamma$ be any truth assignment on
 $V(F)\setminus U$.
 Then for the combined assignment $\tau := \beta\gamma$,
 it holds that
 ${\rm sat}_{\tau}(F)=w(F_U)+{\rm sat}_{\gamma}(F\setminus F_U)$.
 Clearly, $\tau$ can be constructed in polynomial time given $\beta$ and $\gamma$.
\end{lemma}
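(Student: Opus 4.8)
The plan is to split $F$ into the two parts $F_U$ (clauses meeting $U$) and $F\setminus F_U$ (clauses disjoint from $U$) and to analyse each part separately under the combined assignment $\tau=\beta\gamma$.

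First I would observe that every clause of $F_U$ is satisfied by $\tau$. By the definition of an autarky, $\beta$ satisfies every clause of $F_U$; that is, each $C\in F_U$ contains a literal $\ell$ on a variable of $U$ that is set to \textsc{true} by $\beta$. Since $\tau=\beta\gamma$ agrees with $\beta$ on all of $U$, the very same literal $\ell$ is set to \textsc{true} by $\tau$, so $\tau$ satisfies $C$. Hence the clauses of $F_U$ contribute exactly $w(F_U)$ to ${\rm sat}_\tau(F)$.

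Next I would handle the clauses of $F\setminus F_U$. By definition no such clause contains any variable of $U$, so each clause of $F\setminus F_U$ involves only variables of $V(F)\setminus U$. Whether $\tau$ satisfies such a clause therefore depends only on the restriction of $\tau$ to $V(F)\setminus U$, which is precisely $\gamma$. Consequently $\tau$ and $\gamma$ satisfy exactly the same clauses of $F\setminus F_U$, whose total weight is ${\rm sat}_\gamma(F\setminus F_U)$. Adding the two contributions gives ${\rm sat}_\tau(F)=w(F_U)+{\rm sat}_\gamma(F\setminus F_U)$, and since $\tau$ is simply the union of the two functions $\beta$ and $\gamma$ it is clearly computable in polynomial time.

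I do not anticipate a genuine obstacle here: the statement is essentially a bookkeeping argument. The only point that needs a little care is the precise meaning of \emph{autarky} — it guarantees satisfaction of each clause of $F_U$ via a literal whose variable lies in $U$, and this property is preserved under \emph{any} extension of $\beta$ to the remaining variables, which is exactly what makes the decomposition of ${\rm sat}_\tau(F)$ clean.
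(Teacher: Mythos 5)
Your proof is correct: the decomposition into $F_U$ (all satisfied by the autarky $\beta$, hence by any extension $\tau$) and $F\setminus F_U$ (whose satisfaction under $\tau$ depends only on $\gamma$) is exactly the standard argument, and the paper itself states this lemma without proof, citing \cite{CroGutJonYeo}, where essentially this same bookkeeping argument appears. No gaps; the one point needing care --- that an autarky satisfies every clause of $F_U$ via a literal over $U$, so the guarantee survives any extension --- is exactly the point you addressed.
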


 A version of Lemma \ref{lem:aut} can be traced back to Monien and Speckenmeyer \cite{MS1985}.  Autarkies were first introduced in \cite{MS1985}; they
are the subject of much study, see, e.g., \cite{FKS2002}, \cite{Kul03}, \cite{Sze2004}, and see \cite{BunKul09} for an overview.
In this paper we only make use of a small part of the research on autarkies, as we may limit ourselves to the concept of matching autarkies for our proofs.

A \emph{parameterized problem} is a subset $L\subseteq \Sigma^* \times
\mathbb{N}$ over a finite alphabet $\Sigma$. $L$ is
\emph{fixed-parameter tractable} if the membership of an instance
$(I,k)$ in $\Sigma^* \times \mathbb{N}$ can be decided in time
$f(k)|I|^{O(1)},$ where $f$ is a function of the
{\em parameter} $k$ only~\cite{DowneyFellows99,FlumGrohe06,Niedermeier06}.
Given a parameterized problem $L$,
a \emph{kernelization of $L$} is a polynomial-time
algorithm that maps an instance $(x,k)$ to an instance $(x',k')$ (the
\emph{kernel}) such that (i)~$(x,k)\in L$ if and only if
$(x',k')\in L$, (ii)~ $k'\leq h(k)$, and (iii)~$|x'|\leq g(k)$ for some
functions $h$ and $g$.
It is well-known \cite{DowneyFellows99,FlumGrohe06,Niedermeier06} that a decidable parameterized problem $L$ is fixed-parameter
tractable if and only if it has a kernel. By replacing  Condition (ii) in the definition of a kernel  by $k'\le k$,
we obtain a definition of a {\em proper kernel} (sometimes, it is called a {\em strong kernel}); cf. \cite{AF,CFM}.

\section{Main Results}\label{sec:mr}

 Our aim is to prove a lower bound on ${\rm sat}(F)$ that  includes a multiple of the number of variables as a term. It is clear that for general $3$-satisfiable $F$ such a bound is impossible. Indeed, consider a formula containing a single clause $C$ containing a large number of variables. We can arbitrarily increase the number of variables in the formula, and the maximum number of satisfiable clauses will always be 1.
We therefore need a reduction rule that cuts out `excess' variables. Our reduction rule is based on the following lemma proved by Fleischner et al. \cite{FKS2002} (Lemma 10),  Kullmann \cite{Kul03} (Lemma 7.7) and Szeider \cite{Sze2004} (Lemma 9).

\begin{lemma}\label{lem:red}
Let $F$ be a CNF formula and let ${\cal C}(F)$ be a multiset of clauses of $F$ where every clause $C$ appears $w(C)$ times.
Define a bipartite graph, $B_F$, associated with $F$ as follows:  $V(F)$ and ${\cal C}(F)$ are partite sets of $B_F$ and there is an edge between
$v \in V(F)$ and $C \in {\cal C}(F)$ in $B_F$ if and only if $v \in V(C)$.
Given a maximum matching in $B_F$, in time $O(|F|)$ we can find an autarky $\beta : U \rightarrow \{ \textsc{true, false} \}$ such that
$F \setminus F_U$ is expanding.
\end{lemma}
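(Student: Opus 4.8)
The plan is to use König's theorem together with a standard "deficiency" argument on the bipartite graph $B_F$. First I would take a maximum matching $M$ in $B_F$ and let $U \subseteq V(F)$ be the set of variables that are \emph{not} saturated by $M$. The key structural claim is that $F \setminus F_U$ is expanding: for every $X \subseteq V(F \setminus F_U)$ we must show $|X| \le w((F \setminus F_U)_X)$. Since every variable of $X$ is saturated by $M$ (as $X$ is disjoint from $U$), the matching $M$ restricted to $X$ assigns to each $x \in X$ a distinct clause $C_x \in {\cal C}(F)$ with $x \in V(C_x)$. I would argue that each such $C_x$ in fact lies in $(F\setminus F_U)$, i.e. contains no variable of $U$; this is exactly the point where maximality of $M$ is used — if some $C_x$ contained a variable $u \in U$, then $M$ would not be maximum, since we could reassign along an augmenting-type swap. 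Actually the cleanest version: let $Z$ be the set of clauses reachable from $U$ by alternating paths; by maximality these clauses are all matched, their partners form a set $W$ of variables, and one shows $(F\setminus F_Z)$... — but since a clause is "soft" enough it is cleaner to phrase it via the matching partners directly. In any case, once each $C_x \in {\cal C}(F\setminus F_U)$, the multiset $\{C_x : x \in X\}$ (counted with multiplicity coming from distinct matching edges) injects into ${\cal C}((F\setminus F_U)_X)$, so its size $|X|$ is at most $w((F\setminus F_U)_X)$, giving expansion.

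Next I would construct the autarky $\beta$ on $U$. The natural definition is: orient things so that $\beta$ sets each variable of $U$ in whichever way satisfies the clauses "attached" to $U$. More precisely, let $U^*$ be the union of $U$ with all variables reachable from $U$ by $M$-alternating paths in $B_F$, and let $F_{U^*}$ be the clauses touched; by maximality of $M$, every clause of $F_{U^*}$ is matched by $M$ to a \emph{distinct} variable of $U^*$, and I would set $\beta$ on $U^*$ so that each such clause is satisfied by its matched variable. This is consistent because the matching partners are distinct, so there is no conflicting demand on any single variable; every clause of $F_{U^*} = F_{U}$ (one checks these coincide, again using maximality so that no clause of $F_U$ escapes $U^*$) is satisfied, hence $\beta$ is an autarky on the variable set it is defined on — and replacing that variable set by exactly $U$ (the unsaturated variables) by absorbing the alternating-path extension works out because $F_{U} = F_{U^*}$ means the same clause set is governed.

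The two steps above are essentially the two halves of the Gallai–Edmonds / König structure theorem specialized to this setting, and the running time is $O(|F|)$ because once a maximum matching is given, computing the set of vertices reachable by alternating paths from the unsaturated side, and reading off $\beta$ from the matching edges, are linear-time operations. The main obstacle — and the only place real care is needed — is the bookkeeping around whether one should take $U$ to be just the $M$-unsaturated variables or the larger alternating-reachable set, and verifying that with the right choice one simultaneously gets (a) $\beta$ is a genuine autarky on $U$ (no conflicting assignments, all of $F_U$ satisfied) and (b) $F \setminus F_U$ is expanding. Both hinge on the same fact: maximality of $M$ forbids augmenting paths, which is what keeps the clause-to-variable assignment injective on the relevant side and prevents clauses in $F_U$ from being "half in, half out." Since this lemma is cited from \cite{FKS2002,Kul03,Sze2004}, I would present only this sketch and refer to those sources for the full verification.
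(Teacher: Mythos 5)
Your overall route --- matching autarkies extracted from a maximum matching via alternating paths --- is the standard one behind the sources the paper cites (the paper itself states this lemma without proof), but as written your proposal has a genuine gap: you never settle what $U$ is, and both claims you offer to bridge the two candidate choices are false. With $U$ taken to be the $M$-unsaturated variables, the step ``if some $C_x$ contained a variable $u\in U$, then $M$ would not be maximum'' does not hold: the alternating path $u,\,C_x,\,x$ is not augmenting, because $x$ is saturated. Concretely, let $F=\{C_1,C_2\}$ with $C_1=\{u,x\}$, $C_2=\{x,z\}$, all weights $1$, and let $M$ match $x$ to $C_1$ and $z$ to $C_2$. This $M$ is maximum, $u$ is unsaturated, and the clause matched to $x$ contains $u$; moreover $F\setminus F_{\{u\}}=\{C_2\}$ has variable set $\{x,z\}$ and is not expanding, so the lemma's conclusion genuinely fails for $U=\{u\}$. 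The same example refutes your later claim that $F_U=F_{U^*}$: here $F_{\{u\}}=\{C_1\}$ while $F_{U^*}=\{C_1,C_2\}$, so ``absorbing the alternating-path extension'' back into the unsaturated set does not work out.

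The correct bookkeeping, which you gesture at but do not commit to, is to take $U:=U^*$, the set of all variables reachable from the $M$-unsaturated variables by $M$-alternating paths (including those unsaturated variables). Maximality of $M$ then gives both halves for this one set: every clause containing a variable of $U^*$ is reachable, hence matched, and its matching partner lies in $U^*$; assigning each such partner the value that satisfies its matched clause (and the unsaturated variables arbitrarily) is consistent, since no variable is matched to two clauses, so $\beta$ is an autarky on $U^*$. Dually, every variable $x\notin U^*$ is saturated, and its matched clause $C_x$ contains no variable of $U^*$ (otherwise $C_x$ would be reachable and its partner would lie in $U^*$, contradicting $x\notin U^*$), which yields the injection $x\mapsto C_x$ into ${\cal C}\bigl((F\setminus F_{U^*})_X\bigr)$ proving that $F\setminus F_{U^*}$ is expanding; reachability and the assignment are computable in linear time from the given matching. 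With that single change --- proving both properties for $U^*$ rather than for the unsaturated variables --- your sketch becomes the standard argument of the cited papers.
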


 The papers \cite{FKS2002}, \cite{Kul03} and \cite{Sze2004} actually show that $F \setminus F_U$ is $1$-expanding (see \cite{FKS2002} or \cite{Sze2004} for a definition), which is a slightly stronger result. For our results it is enough that $F \setminus F_U$ is expanding.
An autarky found by the algorithm of Lemma \ref{lem:red} is of a special kind, called a matching autarky; such autarkies were used first by Aharoni and Linial \cite{AL1986}. Note that the autarky found in Lemma \ref{lem:red} can be empty, i.e., $U=\emptyset$.
Lemmas \ref{lem:aut} and \ref{lem:red} immediately imply the following:

\begin{lemma} \label{fm_res}
Let $F$ be a CNF formula and let $\beta : U \rightarrow \{ \textsc{true, false} \}$ be an autarky found by the algorithm of Lemma \ref{lem:red}.
 Then given any truth assignment $\gamma$ on $V(F)\setminus U$, we can find, in polynomial time, a truth assignment $\tau$ such that
 ${\rm sat}_{\tau}(F)=w(F_U)+{\rm sat}_{\gamma}(F\setminus F_U)$,
and $F\setminus F_U$ is an expanding formula.
\end{lemma}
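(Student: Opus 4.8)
The plan is to compose Lemma~\ref{lem:red} with Lemma~\ref{lem:aut}; since the hypothesis already hands us the autarky $\beta$ produced by the algorithm of Lemma~\ref{lem:red}, almost nothing remains to be done.

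First I would record the two outputs guaranteed by Lemma~\ref{lem:red} for $\beta\colon U\to\{\textsc{true},\textsc{false}\}$: (i) $\beta$ is an autarky for $F$, i.e.\ it satisfies every clause of $F_U$; and (ii) the residual formula $F\setminus F_U$ is expanding. I would also note that everything is polynomial-time computable: a maximum matching in the bipartite graph $B_F$ can be found in polynomial time (handling the clause multiplicities $w(C)$ as capacities rather than literally expanding ${\cal C}(F)$, which could be exponentially large in the binary encoding of the weights), and Lemma~\ref{lem:red} then extracts $\beta$ and $U$ from such a matching in time $O(|F|)$.

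Next, given the arbitrary truth assignment $\gamma$ on $V(F)\setminus U$, I would apply Lemma~\ref{lem:aut} to the autarky $\beta$ and this $\gamma$. It produces the combined assignment $\tau:=\beta\gamma$ on $V(F)$, computable in polynomial time from $\beta$ and $\gamma$, and delivers exactly the required identity ${\rm sat}_\tau(F)=w(F_U)+{\rm sat}_\gamma(F\setminus F_U)$: every clause of $F_U$ contributes because $\beta$ is an autarky, while the clauses of $F\setminus F_U$ contain no variable of $U$ and so are scored solely by $\gamma$. Together with the expansion property from the previous step, this is precisely the statement of the lemma.

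There is essentially no obstacle here; the only detail deserving a word of care is the polynomial-time bound for computing the maximum matching when clause weights are large, which is standard and is dispatched by the capacitated-matching (network-flow) reformulation just mentioned. Everything else is immediate.
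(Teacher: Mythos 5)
Your proposal is correct and matches the paper exactly: the paper also obtains this lemma as an immediate combination of Lemma~\ref{lem:red} (autarky with $F\setminus F_U$ expanding) and Lemma~\ref{lem:aut} (the identity for the combined assignment $\tau=\beta\gamma$), giving it no separate proof. Your extra remark on handling the weights in the matching computation is a reasonable clarification but not needed beyond what the cited lemmas already assert.
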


%
%


The following theorem is the main bound of this paper, and the next two sections are dedicated to proving it.

\begin{theorem} \label{main_bound}
Let $F$ be an expanding $3$-satisfiable CNF formula. Then there exists a constant $\rho>0.0044$ such that
${\rm sat}_{\tau}(F) \geq \frac{2}{3} w(F) + \rho |V(F)|$  for some truth assignment $\tau$ that can be found in polynomial time.
\end{theorem}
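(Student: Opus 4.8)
We would prove Theorem~\ref{main_bound} by splitting on whether $F$ is fat, bounding $\sat(F)$ in each case, and then choosing $\rho$ so that both estimates deliver it; the non-fat case -- essentially the case where $F$ is hard -- carries all the difficulty.

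\textbf{The fat case.} First we would establish, for \emph{every} $3$-satisfiable formula $F$, the auxiliary bound $\sat(F)\ge\tfrac23 w(F)+\tfrac1{27}w(\soft)$. Put each variable of $\unitvar$ to \textsc{true} independently with probability $\tfrac23$ and every other variable to \textsc{true} with probability $\tfrac12$. By Assumption~\ref{assump1} and Definition~\ref{def:partsOfF} the hard clauses are positive units $\{x\}$ ($x\in\unitvar$) and implications $\{\bar x,y\}$ or $\{\bar x,\bar y\}$ ($x\in\unitvar$, $y\notin\unitvar$), and a one-line computation shows that each of these is satisfied with probability \emph{exactly} $\tfrac23$, whereas every soft clause is satisfied with probability at least $\tfrac{19}{27}$ (the extreme case being a clause $\{\bar x_1,\bar x_2,\bar x_3\}$ on three unit variables). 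Taking expectations gives $\Exp[\sat_\sigma(F)]\ge\tfrac23 w(F)+\tfrac1{27}w(\soft)$, and conditional expectations make a witnessing assignment polynomial-time computable. Now if $F$ is fat then $w(\soft)\ge\tfrac{18}{133}(\hardvars)$; since $F$ is expanding and the variables of $\softvar$ occur only in soft clauses we also have $|\softvar|\le w(F_{\softvar})\le w(\soft)$, so $|V(F)|=(\hardvars)+|\softvar|\le\tfrac{151}{18}w(\soft)$, and the auxiliary bound yields $\sat(F)\ge\tfrac23 w(F)+\tfrac1{27}\cdot\tfrac{18}{151}\,|V(F)|=\tfrac23 w(F)+\tfrac2{453}|V(F)|$.

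\textbf{The hard case.} The core of the proof is a bound $\sat(F)\ge\tfrac23 w(F)+\tfrac2{21}(\hardvars)$ for every expanding $3$-satisfiable $F$ with $F=\hard$. Because the biased distribution above gives \emph{no} surplus on hard clauses, the gain of order $\hardvars$ must be extracted from their structure. Write $u_x=w(\{x\})$ for $x\in\unitvar$, let $w_x$ be the total weight of implications out of $x$, and for $y\in\twovar$ let $a_y$ (resp.\ $b_y$) be the total weight of implications forcing $y$ true (resp.\ false); expansion gives $a_y+b_y\ge1$, $\hardvars\le w(F)$, and -- expanding the set formed by $y$ together with all variables implying a literal on $y$ -- the sharper inequality $\sum_x(u_x+w_x-1)\ge1$, where the sum runs over those variables. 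We would analyse the random assignment that puts each $x\in\unitvar$ to \textsc{true} independently with probability $q$, every other $\unitvar$-variable to \textsc{false}, and then each $y\in\twovar$ so as to satisfy the heavier of the two sides of the implications at $y$ that survive; writing $S$ for the set of variables put \textsc{true} and $a_y^S,b_y^S$ for the corresponding restricted weights, one obtains
\[\Exp[\sat_\sigma(F)]=(q-\tfrac23)w(F_1)+(\tfrac13-\tfrac q2)w(F_2)+\tfrac23 w(F)+\tfrac12\sum_{y\in\twovar}\Exp|a_y^S-b_y^S|.\]
Away from the balanced regime $w(F_1)=\tfrac12 w(F_2)$ one tunes $q\in\{0,1\}$ and the linear part alone already contributes a positive multiple of $|w(F_2)-2w(F_1)|$; in the balanced regime that part vanishes for all $q$, so one takes $q=\tfrac12$ and bounds each $\Exp|a_y^S-b_y^S|$ below by a Khintchine-type anti-concentration inequality, using that the implication weights $w_{xy}$ at $y$ satisfy $\sum_x w_{xy}^2\ge(a_y+b_y)^2/d_y\ge1/d_y$ with $d_y$ the number of variables implying a literal on $y$. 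The main obstacle is to turn these pointwise estimates into the required uniform bound $\tfrac12\sum_{y}\Exp|a_y^S-b_y^S|\ge\tfrac2{21}(\hardvars)$: this needs a case analysis on the implication (bi)graph -- chiefly on the sizes of the $d_y$ -- in which sparse parts are handled directly by the above and dense parts by the excess inequality $\sum_x(u_x+w_x-1)\ge1$ over neighbourhoods and its global consequences. If $F$ is expanding and $3$-satisfiable but not itself hard, we would discard the soft clauses, apply Lemma~\ref{fm_res} to the hard part $\hard$ to make what remains expanding (the induced loss is absorbed since $\tfrac13\ge2\cdot\tfrac2{21}$), run the hard bound there, and recombine, which -- discarding the soft clauses in the final count -- gives $\sat(F)\ge\tfrac23 w(F)+\tfrac2{21}(\hardvars)-\tfrac23 w(\soft)$.

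\textbf{Assembly.} If $F$ is not fat then $|\softvar|\le w(\soft)<\tfrac{18}{133}(\hardvars)$, hence $\hardvars>\tfrac{133}{151}|V(F)|$, and the hard-case bound gives $\sat(F)>\tfrac23 w(F)+(\tfrac2{21}-\tfrac{18}{133}\cdot\tfrac23)(\hardvars)=\tfrac23 w(F)+\tfrac2{399}(\hardvars)>\tfrac23 w(F)+\tfrac2{453}|V(F)|$. Together with the fat case this shows $\sat_\tau(F)\ge\tfrac23 w(F)+\rho|V(F)|$ with $\rho=\tfrac2{453}>0.0044$ and $\tau$ polynomial-time computable in either case; the value $\tfrac{18}{133}$ is precisely the threshold that balances the two estimates.
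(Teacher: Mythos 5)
Your fat case is fine: it is essentially the paper's own argument (the biased random assignment giving $\sat(F)\ge\frac23 w(F)+\frac1{27}w(\soft)$, then the fatness and expansion inequalities $|\softvar|\le w(\soft)$ to reach $\frac{2}{453}|V(F)|$). The gap is in the hard case. You correctly identify the target bound $\sat(F)\ge\frac23 w(F)+\frac2{21}(\hardvars)$ for hard formulas, but your proposed route to it --- a single one-parameter family of assignments (each $x\in\unitvar$ set \textsc{true} with probability $q$, each $y\in\twovar$ then set optimally), with the balanced regime handled by Khintchine-type anti-concentration and an unspecified case analysis on the implication graph --- is not a proof, and in fact no analysis of that family can succeed. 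Consider the hard formula with $\twovar=\{y\}$, $\unitvar=\{x_1,\dots,x_N\}$, unit clauses $\{x_i\}$ of weight $1$, and implications $\{\bar x_i,y\}$ of weight $2$ for $i\le N/2$ and $\{\bar x_i,\bar y\}$ of weight $2$ for $i>N/2$. It is $3$-satisfiable, expanding, and balanced ($w(F_2)=2w(F_1)$), so your linear term vanishes identically in $q$; the remaining gain is $\frac12\Exp|a_y^S-b_y^S|$, where $a_y^S-b_y^S$ is a centred difference of two binomials, hence $\Theta(\sqrt N)$ for every $q\in(0,1)$ and $0$ for $q\in\{0,1\}$. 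The required gain is $\frac2{21}(N+1)=\Theta(N)$, so your family of assignments falls short by an unbounded factor; the ``excess inequality'' $\sum_x(u_x+w_x-1)\ge1$ cannot repair this, since the shortfall is a property of the assignments themselves, not of the estimate.

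The missing idea is the dual randomization. The paper proves two separate bounds and takes the better one: (i) bias $\unitvar$ at $2/3$ and set $\twovar$ optimally, gaining at least $\frac29$ per variable of $\twovar$ (this is the $q=2/3$ member of your family, and the independence of the surviving implication weights at each $y$, forced by $3$-satisfiability, gives the anti-concentration probability $4/9$); and (ii) set $\twovar$ uniformly at random and then set each $x_i\in\unitvar$ optimally against the surviving weight $Z_i$, using $\max(a,b)\ge\frac23 a+\frac13 b+\frac{|a-b|}3$ and the fact that $w(\{x_i\})-Z_i\ne 0$ with probability at least $\frac12$, gaining at least $\frac16$ per variable of $\unitvar$. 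The convex combination $\max(\frac16 n_1,\frac29 n_2)\ge\frac{12}{21}\cdot\frac16 n_1+\frac9{21}\cdot\frac29 n_2=\frac2{21}(n_1+n_2)$ then gives the hard-case bound (note it needs no expansion hypothesis on $F_h$, so your detour through Lemma~\ref{fm_res} for the hard part is unnecessary). On the counterexample above it is bound (ii) that saves the day, which is exactly what your single family lacks. Your final assembly arithmetic ($\frac2{21}-\frac23\cdot\frac{18}{133}=\frac2{399}$, then $\frac2{399}\cdot\frac{133}{151}=\frac2{453}$) is correct once the hard-case bound is available, and matches the paper's computation.
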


Theorem \ref{main_bound} follows from the next two propositions, proved in Sections \ref{fat_proof} and \ref{main_proof}, respectively.

\begin{proposition} \label{fat_bound}
 Let $F$ be a fat expanding $3$-satisfiable CNF formula. Then there exists a constant $\rho \ge \frac{2}{453}$ such that
${\rm sat}_{\tau}(F) \geq \frac{2}{3} w(F) + \rho |V(F)|$  for some truth assignment $\tau$ that can be found in polynomial time.
\end{proposition}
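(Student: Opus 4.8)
The plan is to exploit the hypothesis that $F$ is fat, i.e. that $w(\soft) \geq \frac{18}{133}(\hardvars)$, so that the soft clauses alone carry enough ``weight'' to account for a linear term in $|V(F)|$. The starting observation, which I would establish first as a self-contained claim, is that for a $3$-satisfiable $F$ one has $\sat(F) \geq \frac{2}{3}w(F) + \frac{1}{27}w(\soft)$; this is the bound mentioned in the remark after Definition \ref{def:fat}. To prove it, I would take the random assignment that sets each variable independently to \textsc{true} with a carefully chosen probability, but correlate the unit variables: for $x \in \unitvar$ set $x = $ \textsc{true} with probability close to $1$ (say $p$), and for all other variables use probability $1/2$ (or a value tuned to make the soft-clause gain come out). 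The point of the $3$-satisfiability structure encoded in Definition \ref{def:partsOfF} is that the only clauses that are ``bad'' under a skewed assignment are the hard clauses $F_1 \cup F_2$, and one checks by the explicit list of forbidden triplets that $\Exp[\sat_\tau(\hard)] \geq \frac{2}{3}w(\hard)$ for a suitable $p$, while each soft clause, having no constrained structure, can be shown to be satisfied with probability at least $\frac{2}{3} + \frac{1}{27}$ (or some explicit surplus); linearity of expectation then gives the claim, and the assignment is found in polynomial time by the method of conditional expectations.

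Next I would convert $w(\soft)$ into a multiple of $|V(F)|$. Write $V(F) = \unitvar \cup \twovar \cup \softvar$. The fatness hypothesis directly bounds $|\unitvar| + |\twovar| \leq \frac{133}{18}w(\soft)$. For the remaining variables $\softvar$, which appear only in soft clauses, I would use that $F$ is expanding: applying the expansion inequality $|X| \leq w(F_X)$ with $X = \softvar$ gives $|\softvar| \leq w(F_{\softvar}) \leq w(\soft)$, since every clause containing a variable of $\softvar$ is by definition soft. Combining, $|V(F)| \leq |\unitvar| + |\twovar| + |\softvar| \leq \left(\frac{133}{18} + 1\right)w(\soft) = \frac{151}{18}w(\soft)$. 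Then from $\sat(F) \geq \frac{2}{3}w(F) + \frac{1}{27}w(\soft) \geq \frac{2}{3}w(F) + \frac{1}{27}\cdot\frac{18}{151}|V(F)| = \frac{2}{3}w(F) + \frac{2}{453}|V(F)|$, which is exactly the claimed constant $\rho \geq \frac{2}{453}$.

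The main obstacle I anticipate is the first step: pinning down the exact probability $p$ and verifying that the skewed random assignment really does keep $\Exp[\sat_\tau(\hard)]$ at or above $\frac{2}{3}w(\hard)$ while simultaneously gaining a definite constant surplus on every soft clause. This requires a careful case analysis over the clause types in $F_1$ (unit clauses $\{x\}$), $F_2$ (clauses $\{\bar x, y\}$ or $\{\bar x, \bar y\}$ with $x \in \unitvar$, $y \notin \unitvar$), and the various shapes of soft clauses (in particular short soft clauses of width $1$ or $2$ not of the hard form, where the satisfaction probability is most delicate), optimizing $p$ so that the worst case over all these still leaves the stated margin. Once the per-clause probability bounds are in hand, linearity of expectation, derandomization via conditional expectations, and the counting argument above are routine. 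I would also double-check that the constant $\frac{18}{133}$ in the definition of ``fat'' is precisely what makes the two constraints ($\Exp[\sat_\tau(\hard)] \geq \frac{2}{3}w(\hard)$ and the soft surplus) compatible with the target $\rho = \frac{2}{453}$, i.e. that the definition of fatness was reverse-engineered from this computation.
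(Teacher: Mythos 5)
Your proposal follows essentially the same route as the paper. The second half of your argument -- fatness gives $|V_1|+|V_2|\le\frac{133}{18}w(F_s)$, expansion applied to $X=V_s$ gives $|V_s|\le w(F_{V_s})\le w(F_s)$ (every clause on a variable of $V_s$ is soft), hence $|V(F)|\le\frac{151}{18}w(F_s)$ and $\frac{1}{27}w(F_s)\ge\frac{2}{453}|V(F)|$ -- is exactly the paper's computation, merely organized by bounding $|V(F)|$ first instead of splitting $\frac{1}{27}w(F_s)$ into two pieces. The step you defer is precisely the paper's Lemma~\ref{nonhard}, and it is less delicate than you anticipate: the correct bias is $p=\frac{2}{3}$ exactly, not ``close to $1$'' (with $p$ near $1$ a hard clause $\{\bar x,y\}$ is satisfied with probability about $\frac{1}{2}<\frac{2}{3}$, so the hard part of the bound would fail). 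With $p=\frac{2}{3}$ on unit variables and $\frac{1}{2}$ on all others, every hard clause is satisfied with probability exactly $\frac{2}{3}$ (unit clauses: $\frac{2}{3}$; clauses $\{\bar x,y\}$ or $\{\bar x,\bar y\}$ with $x\in V_1$, $y\notin V_1$: $1-\frac{2}{3}\cdot\frac{1}{2}=\frac{2}{3}$), and every soft clause with probability at least $\min\left(\frac{7}{9},\frac{5}{6},\frac{3}{4},1-\left(\frac{2}{3}\right)^3\right)=\frac{19}{27}=\frac{2}{3}+\frac{1}{27}$, where the $\frac{7}{9}$ case uses $3$-satisfiability to rule out $\{\bar x_1,\bar x_2\}$ with both variables unit; conditional expectations derandomizes, and your counting argument then yields $\rho=\frac{2}{453}$ as claimed.
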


\begin{proposition} \label{lean_bound}
 Let $F$ be an expanding $3$-satisfiable CNF formula which is not fat. Then there exists a constant $\rho \ge \frac{2}{453}$ such that
${\rm sat}_{\tau}(F) \geq \frac{2}{3} w(F) + \rho |V(F)|$  for some truth assignment $\tau$ that can be found in polynomial time.
\end{proposition}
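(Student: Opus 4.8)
\medskip
\noindent\textbf{Proof plan.}
The plan has two parts: a reduction that replaces $|V(F)|$ by the number $\hardvars$ of hard variables, and then a probabilistic argument for the resulting bound using both $3$-satisfiability and expansion. For the reduction, observe that every clause of $F$ containing a variable of $\softvar$ is soft (hard clauses use only variables of $\unitvar\cup\twovar$), so applying expansion to $X=\softvar$ gives $|\softvar|\le w(F_{\softvar})\le w(\soft)$, which is $<\tfrac{18}{133}(\hardvars)$ since $F$ is not fat. Hence $|V(F)|<\tfrac{151}{133}(\hardvars)$, and it suffices to prove $\sat(F)\ge\tfrac23 w(F)+\rho'(\hardvars)$ for an absolute constant $\rho'>0$; tracking constants, $\rho'=\tfrac{2}{399}$ returns the required $\rho=\tfrac{2}{453}$. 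Below I also use, without further comment, that $w(F_1)=\sum_{x\in\unitvar}w(\{x\})\ge|\unitvar|$ (weights are positive integers) and that $|\twovar|\le w(F_{\twovar})\le w(F_2)+w(\soft)$, so $w(F_2)$ differs from $|\twovar|$ by at most $\tfrac{18}{133}(\hardvars)$.

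For $y\in\twovar$ let $a_y$ (resp.\ $b_y$) be the total weight of the $F_2$-clauses of the form $\{\bar x,y\}$ (resp.\ $\{\bar x,\bar y\}$), so $w(F_2)=\sum_y(a_y+b_y)$; here $3$-satisfiability is crucial, since it forces that for each $x$ and $y$ at most one of $\{\bar x,y\},\{\bar x,\bar y\}$ occurs, i.e.\ the variable sets $P_y=\{x:\{\bar x,y\}\in F_2\}$ and $N_y=\{x:\{\bar x,\bar y\}\in F_2\}$ are disjoint. I would then take $\tau$ to be the best of a small number of explicit (randomised) assignments, according to the values of $s:=w(F_1)-\tfrac12 w(F_2)$ and of $\sum_y|a_y-b_y|$. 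If $s$ is negative and at least a constant times $\hardvars$ in absolute value, the all-\textsc{false} assignment satisfies all of $F_2$, so $\sat(F)\ge w(F_2)$ and $w(F_2)-\tfrac23 w(F)=-\tfrac23 s-\tfrac23 w(\soft)=\Omega(\hardvars)$. If instead $s$ is at least a constant times $\hardvars$, or $\sum_y|a_y-b_y|$ is, then setting $\unitvar$ to \textsc{true} and each $y$ to the value satisfying $\max(a_y,b_y)$ (and $\softvar$ by fair coins) leaves only $\sum_y\min(a_y,b_y)=\tfrac12 w(F_2)-\tfrac12\sum_y|a_y-b_y|$ of $F_2$ unsatisfied, giving $\sat(F)-\tfrac23 w(F)\ge\tfrac13 s+\tfrac12\sum_y|a_y-b_y|-\tfrac23 w(\soft)=\Omega(\hardvars)$.

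The delicate regime is the \emph{balanced borderline} one, where $|s|$ and $\sum_y|a_y-b_y|$ are both $o(\hardvars)$ (so $w(F_1)\approx\tfrac12 w(F_2)$): there every product distribution assigning each $x\in\unitvar$ a \emph{fixed} \textsc{true}-probability satisfies, in expectation, exactly $\tfrac23 w(\hard)$ of the hard clauses, so the tie has to be broken using $P_y\cap N_y=\emptyset$. The device I would use is a random $2$-colouring of $\unitvar$: each $x\in\unitvar$ is set to \textsc{true}/\textsc{false} with probability $\tfrac12$ independently, each $y\in\twovar$ is set to the value leaving the smaller weight of clauses on $y$ unsatisfied (taking that value only with probability $1-\delta$ for a small constant $\delta>0$, so that soft clauses are still satisfied with positive constant probability), and $\softvar$ by fair coins. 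Writing $R^P_y$ and $R^N_y$ for the total weights of the clauses $\{\bar x,y\}$, $\{\bar x,\bar y\}$ whose variable $x$ got colour \textsc{true}, the expected weight of clauses on $y$ left unsatisfied is $\tfrac14(a_y+b_y)-(\tfrac12-\delta)\,\E\bigl|R^P_y-R^N_y\bigr|$; and, because $P_y$ and $N_y$ are disjoint, $R^P_y$ and $R^N_y$ are \emph{independent}, which (via Sperner's theorem) forces $\E\bigl|R^P_y-R^N_y\bigr|\ge\tfrac12$ always, and $\E\bigl|R^P_y-R^N_y\bigr|=\Omega(\min(a_y,b_y))$ when $y$ has few neighbours. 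Summing, this beats $\tfrac23 w(F)$ by $\Omega(\hardvars)$ unless the heavy $F_2$-clauses are concentrated on a few high-degree $y$'s, i.e.\ unless $|\twovar|=o(\hardvars)$.

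This last case --- the balanced borderline regime with $|\unitvar|\gg|\twovar|$ --- is the main obstacle. Now $w(F_2)\approx 2w(F_1)$ is large while $\twovar$ is small, so the $2$-clauses are heavy; the surplus must come from the unit clauses. Here I would split $\unitvar$ according to whether $w(\{x\})$ dominates the total weight $t_x$ of the clauses $\{\bar x,\cdot\}$ on $x$, bias the \emph{unit-heavy} variables ($w(\{x\})\gg t_x$) towards \textsc{true} --- so their unit clause is a near-certain gain at only a small cost among the light $2$-clauses on them --- and $2$-colour the rest, and then use the matching in the bipartite graph $B_F$ of Lemma \ref{lem:red} that saturates $V(F)$ (it exists since the expanding condition is exactly Hall's condition for $V(F)$ in $B_F$, and, after an alternating-path exchange, it may be taken to match every unit variable to a copy of its own unit clause) to certify that a constant fraction of the unit weight is unit-heavy and hence produces an $\Omega(\hardvars)$ surplus. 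Fitting the cases together and making this last accounting close is the delicate point of the whole argument. Finally, since every assignment above is drawn from a product distribution with rational, polynomial-time computable parameters, the method of conditional expectations turns each into a deterministic polynomial-time construction of the required $\tau$, which completes the proof.
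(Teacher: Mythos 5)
Your first two steps do track the paper: the reduction via expansion plus non-fatness to the hard variables $\hardvars$ is exactly how the paper absorbs the soft clauses, and your disjointness-of-$P_y,N_y$ observation (3-satisfiability forbids both $\{\bar x,y\}$ and $\{\bar x,\bar y\}$), making the two residual weights on $y$ independent and hence anticoncentrated, is the same key idea as the paper's Lemma~\ref{lem:nbound} (the paper biases $\unitvar$ to \textsc{true} with probability $2/3$ rather than $1/2$, which makes every product assignment already give $\frac23 w(\hard)$ in expectation and so avoids your case analysis over $s=w(F_1)-\frac12 w(F_2)$ altogether). The genuine gap is your last regime, $n_1\gg n_2$ in the balanced case, which is precisely where the paper needs its second probabilistic lemma (Lemma~\ref{lem:mbound}): sample $\twovar$ uniformly at random \emph{first}, then set each $x_i$ optimally by comparing $w(\{x_i\})$ with the random residual weight $Z_i$ of unsatisfied $2$-clauses on $x_i$; the inequality $\max(a,b)\ge\frac23 a+\frac13 b+\frac13|a-b|$ together with $\Pr[w(\{x_i\})\neq Z_i]\ge\frac12$ yields a guaranteed gain of $\frac16$ per unit variable, with no structural assumption on the formula. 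Your proposed substitute --- bias the ``unit-heavy'' variables ($w(\{x\})\gg t_x$) to \textsc{true} and use the $B_F$-matching guaranteed by expansion to ``certify that a constant fraction of the unit weight is unit-heavy'' --- does not work: expansion certifies no such thing. Take $n_2=\epsilon n_1$ variables $y_j$, give every $x_i$ a unit clause of weight $1$ and two $2$-clauses of weight $1$ each on distinct $y_j$'s with mixed signs; this formula is hard, $3$-satisfiable, expanding and balanced ($w(F_2)=2w(F_1)$), lies in your last regime, and has \emph{no} unit-heavy variables at all ($t_x=2w(\{x\})$ for every $x$), while your $2$-colouring bound only yields a surplus of order $n_2=\epsilon n_1$, not $\Omega(n_1)$ with an absolute constant. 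The gain in such formulas comes from the fluctuation of $Z_i$ around $w(\{x_i\})$ exploited by choosing $x_i$ \emph{after} the randomness on $\twovar$, an idea absent from your sketch.

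A secondary but real problem is quantitative. Your own reduction shows you must prove a hard-clause surplus of at least $\frac{2}{399}(\hardvars)$ \emph{after} paying up to $\frac23 w(\soft)\le\frac{12}{133}(\hardvars)$, i.e.\ you need the hard-formula bound $\frac{2}{21}(\hardvars)$ on the nose --- exactly the constant the paper extracts as the convex combination of $\frac16 n_1$ and $\frac29 n_2$. Your argument, phrased with $\Omega(\cdot)$, $o(\cdot)$, an unspecified $\delta$, and thresholds ``fitted together'' only in outline, does not produce any explicit constant, let alone one meeting $\frac{2}{21}$; so even setting aside the last regime, the claimed $\rho\ge\frac{2}{453}$ is not established. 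The polynomial-time derandomization by conditional expectations is fine and matches the paper.
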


%
%
%


As a direct consequence of Theorem \ref{main_bound} and Lemma \ref{fm_res}, we also have the following bound on
${\rm sat}(F)$ for {\em any} $3$-satisfiable CNF formula $F$.

\begin{corollary}\label{cor:bound}
Let $F$ be a $3$-satisfiable CNF formula. Then, in time $O(|F|)$ we can find an autarky $\beta : U \rightarrow \{ \textsc{true, false} \}$ such that $F\setminus F_U$ is expanding. Moreover, there exists a constant $\rho>0.0044$ such that
$${\rm sat}_{\tau}(F) \ge \frac{2}{3}w(F) + \frac{1}{3}w(F_U) + \rho |V(F\setminus F_U)|$$ for some truth assignment $\tau$ that can be found in polynomial time.
\end{corollary}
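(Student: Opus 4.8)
The plan is to obtain Corollary~\ref{cor:bound} by chaining together Lemma~\ref{lem:red}, an elementary hereditary property of $3$-satisfiability, Theorem~\ref{main_bound}, and the gluing Lemma~\ref{fm_res}; no new estimates are needed, only one arithmetic identity. First I would apply Lemma~\ref{lem:red} to $F$: computing a maximum matching in the bipartite graph $B_F$ and running the associated $O(|F|)$-time procedure produces a set $U\subseteq V(F)$ and an autarky $\beta:U\to\{\textsc{true},\textsc{false}\}$ such that $F\setminus F_U$ is expanding. This is precisely the first assertion of the corollary.

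Next I would note that $F\setminus F_U$ is still $3$-satisfiable, since any three of its clauses are in particular three clauses of $F$, and $F$ is $3$-satisfiable. Thus $F\setminus F_U$ is an expanding $3$-satisfiable CNF formula, and Theorem~\ref{main_bound} applies to it: there is a constant $\rho>0.0044$ and a polynomial-time-computable truth assignment $\gamma$ on $V(F\setminus F_U)$ with
\[
{\rm sat}_{\gamma}(F\setminus F_U)\ \ge\ \tfrac{2}{3}\,w(F\setminus F_U)+\rho\,|V(F\setminus F_U)|.
\]

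Finally I would combine $\beta$ and $\gamma$ via Lemma~\ref{fm_res} to get, in polynomial time, a truth assignment $\tau$ on $V(F)$ satisfying ${\rm sat}_{\tau}(F)=w(F_U)+{\rm sat}_{\gamma}(F\setminus F_U)$. Plugging in the displayed bound and using $w(F\setminus F_U)=w(F)-w(F_U)$ yields
\[
{\rm sat}_{\tau}(F)\ \ge\ w(F_U)+\tfrac{2}{3}\bigl(w(F)-w(F_U)\bigr)+\rho\,|V(F\setminus F_U)|\ =\ \tfrac{2}{3}\,w(F)+\tfrac{1}{3}\,w(F_U)+\rho\,|V(F\setminus F_U)|,
\]
which is the desired inequality. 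I do not anticipate any genuine difficulty, as the statement is a direct consequence of results already in hand; the only points that need care are verifying that the sub-formula $F\setminus F_U$ inherits $3$-satisfiability (so that Theorem~\ref{main_bound} is legitimately applicable) and keeping the bookkeeping of $w(F_U)$ versus $w(F\setminus F_U)$ straight in the last step.
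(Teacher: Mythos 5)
Your proposal is correct and follows exactly the route the paper intends: the corollary is stated there as a direct consequence of Lemma \ref{lem:red} (via Lemma \ref{fm_res}) and Theorem \ref{main_bound}, with the same bookkeeping $w(F_U)+\frac{2}{3}(w(F)-w(F_U))=\frac{2}{3}w(F)+\frac{1}{3}w(F_U)$. Your explicit check that $F\setminus F_U$ inherits $3$-satisfiability is a detail the paper leaves implicit, but there is no difference in substance.
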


\begin{corollary}
{\sc 3-S-MaxSat-AE} is fixed-parameter tractable. Moreover, it has a proper kernel with $O(k)$ variables.
\end{corollary}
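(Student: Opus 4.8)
The plan is to derive this corollary from Corollary~\ref{cor:bound} by a standard kernelization argument, exploiting the fact that the bound in Corollary~\ref{cor:bound} contains \emph{three} nonnegative surplus terms over $\frac{2}{3}w(F)$. Given an instance $(F,k)$ of {\sc 3-S-MaxSat-AE}, first apply Assumption~\ref{assump1} (flip literals so all unit clauses are positive) and the algorithm of Lemma~\ref{lem:red} to obtain, in time $O(|F|)$, the autarky $\beta:U\to\{\textsc{true},\textsc{false}\}$ with $F\setminus F_U$ expanding. Now I would examine the quantity $\frac{1}{3}w(F_U)+\rho|V(F\setminus F_U)|$. If this is at least $k$, then by Corollary~\ref{cor:bound} there is a truth assignment $\tau$, constructible in polynomial time, with ${\rm sat}_\tau(F)\ge\frac23 w(F)+k$, so $(F,k)$ is a \textsc{yes}-instance and we output a trivial \textsc{yes}-instance of constant size. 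Otherwise both $w(F_U)<3k$ and $|V(F\setminus F_U)|<k/\rho$; in particular $F\setminus F_U$ has fewer than $k/\rho = O(k)$ variables, and since it is expanding we also have $w(F\setminus F_U)\le\sum_{\text{components}}(\text{weight of clauses meeting that variable set})$, but more directly, applying the expanding property to $X=V(F\setminus F_U)$ gives $|V(F\setminus F_U)|\le w((F\setminus F_U)_{V(F\setminus F_U)})=w(F\setminus F_U)$, which does not immediately bound $w(F\setminus F_U)$ from above — so instead I bound the number of \emph{variables} of the whole formula.

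The key observation is that every variable of $F$ is either a variable of $F\setminus F_U$ or a variable appearing in $F_U$, i.e.\ $V(F)=V(F\setminus F_U)\cup V(F_U)$, and each clause of $F_U$ contributes at most (number of its literals) variables. We need a bound on $|V(F_U)|$ in terms of $w(F_U)$; this requires a separate reduction rule to cap clause widths, because a single clause of $F_U$ could contain arbitrarily many variables. So I would add a preprocessing reduction: if some clause $C$ has $|V(C)|$ larger than some threshold (say $> 3k$, or more carefully, larger than $k$ times a constant), observe that $C$ together with at most two other clauses is always simultaneously satisfiable by $3$-satisfiability, and in fact a random assignment satisfies $C$ with probability $1-2^{-|V(C)|}$, which is extremely close to $1$; combined with ${\rm sat}(F)\ge\frac23 w(F)$ on the rest one argues $(F,k)$ is automatically \textsc{yes} when a very wide clause exists (this mirrors standard MaxSat kernelization). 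After this rule every clause has width $O(k)$ — actually we can do better and note that for the kernel we only need \emph{variables}, so a cleaner route: if $F_U\ne\emptyset$ we may simply delete all clauses of $F_U$ and all variables of $U$ from the instance, replacing the target by a suitably reduced parameter, using Lemma~\ref{lem:aut} which shows the autarky contributes exactly $w(F_U)$ satisfied clauses regardless of the rest. Concretely, set $F' := F\setminus F_U$ and $k' := k - \frac13 w(F_U)$ (rounding up); since $F\setminus F_U$ inherits $3$-satisfiability and expandingness, and since Lemma~\ref{lem:aut} plus ${\rm sat}(F\setminus F_U)\ge\frac23 w(F\setminus F_U)$ let us translate assignments back and forth, one checks $(F,k)$ is \textsc{yes} iff $(F',k')$ is \textsc{yes} — here the direction using that the \emph{optimal} assignment of $F$ can be assumed to extend the matching autarky needs a short argument, but it is exactly the content of how matching autarkies are used in \cite{CroGutJonYeo}. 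If $k'\le 0$ we are done (\textsc{yes}); otherwise $\frac13 w(F_U) < k$ so $w(F_U) < 3k$, and we have reduced to the expanding case.

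So the reduced instance is expanding, and applying Corollary~\ref{cor:bound} to it (with $U=\emptyset$, the bound reads ${\rm sat}(F')\ge\frac23 w(F')+\rho|V(F')|$): if $\rho|V(F')|\ge k'$ output \textsc{yes}; otherwise $|V(F')| < k'/\rho \le k/\rho < 228k$, giving a kernel with $O(k)$ variables. Finally, fixed-parameter tractability follows since a kernel of linear size can be solved by brute force over all $2^{O(k)}$ assignments in time $2^{O(k)}\cdot|I|^{O(1)}$, and the kernel is \emph{proper} because at every step the parameter only decreases ($k'\le k$). I expect the main obstacle to be the back-and-forth equivalence $(F,k)\Leftrightarrow(F',k')$ after removing the autarky: one must verify that an optimal assignment for $F$ loses nothing by agreeing with the matching autarky $\beta$ on $U$, which is the standard but slightly delicate property of (matching) autarkies — namely that flipping $U$ to the autarky values can only increase ${\rm sat}$, established via Lemma~\ref{lem:aut} and the fact that $\beta$ satisfies all of $F_U$. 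A secondary nuisance is handling the rounding in $k'=\lceil k-\tfrac13 w(F_U)\rceil$ and, if one keeps a width-reduction rule instead, choosing the width threshold so the probabilistic argument cleanly forces a \textsc{yes}-instance; neither affects the $O(k)$-variable conclusion.
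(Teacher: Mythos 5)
Your proposal is correct and follows essentially the same route as the paper: find the matching autarky of Lemma~\ref{lem:red}, pass to the expanding instance $(F'=F\setminus F_U,\ k'=\lceil (3k-w(F_U))/3\rceil)$ with $k'\le k$, and apply the bound of Theorem~\ref{main_bound} to $F'$ to either answer {\sc yes} in polynomial time or conclude $|V(F')|=O(k)$ and finish by brute force in time $2^{O(k)}|F|^{O(1)}$, properness following from $k'\le k$. The only detail the paper adds beyond your sketch is the final size accounting for the kernel itself: if $|F'|\ge 2^{|V(F')|}$ one can compute ${\rm sat}(F')$ in polynomial time, and otherwise $|F'|<2^{|V(F')|}=2^{O(k)}$, so $F'$ is genuinely a kernel.
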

\begin{proof}
Let $F$ be a $3$-satisfiable CNF formula, let $\beta : U \rightarrow \{ \textsc{true, false} \}$ be an autarky found by the algorithm of Lemma \ref{lem:red} and let $F'=F\setminus F_U.$  We are to decide whether ${\rm sat}(F) \ge \frac{2}{3}w(F)+k,$ where $k$ (an integer) is the parameter.

By  Lemma \ref{fm_res}, ${\rm sat}(F)=w(F_U)+{\rm sat}(F').$ Thus, ${\rm sat}(F) \ge \frac{2}{3}w(F)+k$ if and only if ${\rm sat}(F') \ge \frac{2}{3}w(F')+k',$ where $k'=\lceil \frac{3k-w(F_U)}{3} \rceil.$ Since $F'$ is an expanding $3$-satisfiable formula,  by Theorem \ref{main_bound} we have
${\rm sat}_{\tau}(F') \geq \frac{2}{3} w(F') + \rho |V(F')|$ for some truth assignment $\tau$ that can be found in polynomial time, where $\rho >0.0044$. Thus, if $\rho |V(F')|\ge k',$ then the answer to {\sc 3-S-MaxSat-AE} is {\sc yes} and the corresponding truth assignment can be found in polynomial time. Otherwise, $|V(F')|<\frac{k'}{\rho}$ and, thus, $|V(F')|=O(k)$, and so we can find the optimal assignment in time $2^{O(k)}m^{O(1)}$, where $m = |F|$.

Let $m'=|F'|.$ If $m'\ge 2^{|V(F')|}$, we can find ${\rm sat}(F')$ and, thus, ${\rm sat}(F)$ in polynomial time. Therefore, we may assume that
$m'<2^{|V(F')|}$ and, thus, $m'=2^{O(k)}$ implying that $F'$ is a kernel. Since $k'\le k$, $F'$ is a proper kernel.

\end{proof}

\section{Proof of Proposition \ref{fat_bound}} \label{fat_proof}

The following result is an easy extension of the $\frac{2}{3}w(F)$ bound on ${\rm sat}(F)$. The proof is almost exactly the same as Yannakakis's proof in \cite{Yannakakis94}; in particular the probability distribution involved is the same. The only difference is that our proof involves extra analysis to get the addition of $\frac{1}{27} w(\soft)$.

\begin{lemma} \label{nonhard}
Let $F$ be a $3$-satisfiable CNF formula. Then we can find, in polynomial time, a truth assignment $\tau$ such that
${\rm sat}_{\tau}(F)\ge \frac{2}{3}w(F) + \frac{1}{27} w(\soft).$
\end{lemma}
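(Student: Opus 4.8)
The plan is to follow Yannakakis's randomized rounding argument with a carefully chosen biased distribution, and then squeeze an extra term out of the soft clauses. Recall Yannakakis's idea: partition the variables into the unit variables $\unitvar$ and the rest, and assign each variable $x\in\unitvar$ the value \textsc{true} with probability $p$ (so that the unit clause $\{x\}$ is likely satisfied), while each variable not in $\unitvar$ is set \textsc{true} with probability $1/2$; here $p$ is the golden-ratio-type constant $\pp$, chosen so that $p=1-p^2$, i.e.\ $p\approx 0.618$. The $3$-satisfiability hypothesis, via Assumption \ref{assump1}, guarantees that the dangerous configurations $\{x\},\{\bar x\}$ and $\{x\},\{y\},\{\bar x,\bar y\}$ and $\{x\},\{\bar x,y\},\{\bar x,\bar y\}$ do not occur, and this is exactly what makes the expectation calculation go through: for each clause $C$ one checks that $\Prob[\tau\text{ satisfies }C]\ge 2/3$. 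The first step, then, is to recall/restate this computation clause-class by clause-class: unit clauses $\{x\}$ are satisfied with probability $p>2/3$; clauses in $F_2$ of the form $\{\bar x,y\}$ or $\{\bar x,\bar y\}$ with $x\in\unitvar$ are satisfied with probability $1-p/2\ge 2/3$ (using $p\le 2/3$... actually $1-p/2\approx 0.69>2/3$); and every other clause $C$ is satisfied with probability $1-2^{-|C|}$ unless all its variables lie in $\unitvar$ and every literal is positive—but then the unit clauses force... here one must just re-run Yannakakis's case analysis to confirm each soft clause is satisfied with probability $\ge 2/3$.

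The second step is the new part: upgrading $\Exp[\sat_\tau(F)]\ge \frac23 w(F)$ to $\Exp[\sat_\tau(F)]\ge \frac23 w(F)+\frac1{27}w(\soft)$. For this I would show that every \emph{soft} clause $C$ is in fact satisfied with probability at least $\frac23+\frac1{27}=\frac{19}{27}$, or more precisely that the total slack contributed by soft clauses is at least $\frac1{27}w(\soft)$. A soft clause $C\in\soft$ either has $|V(C)|\ge 2$ with at least one variable outside $\unitvar$, or is a two-literal clause not of the form treated in $F_2$, etc. The worst case for a soft clause should be a clause on two variables both in $\unitvar$ with mixed signs, or a clause whose every variable is a unit variable occurring negatively—and one checks that in each such case the probability of satisfaction is at least $\frac23+\frac1{27}$; for clauses with a variable outside $\unitvar$ the probability is at least $1-\frac12\cdot\text{(something}\le\frac{2}{3})$, comfortably above $\frac{19}{27}$. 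Summing, $\Exp[\sat_\tau(F)]=\sum_{C}w(C)\Prob[\tau\models C]\ge \frac23\sum_{C\in\hard}w(C)+\frac{19}{27}\sum_{C\in\soft}w(C)=\frac23 w(F)+\frac1{27}w(\soft)$.

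The third step is routine derandomization: the distribution puts independent biases on the variables, the objective $\sat_\tau(F)$ is a sum of indicator contributions whose conditional expectations are computable in polynomial time, so the standard method of conditional expectations sets the variables one at a time, never decreasing $\Exp[\sat_\tau(F)\mid \text{partial assignment}]$, and yields in polynomial time an explicit $\tau$ with $\sat_\tau(F)\ge \frac23 w(F)+\frac1{27}w(\soft)$. I expect the main obstacle to be the bookkeeping in the second step: carefully enumerating the possible "shapes" of a soft clause (especially small clauses, and clauses all of whose variables are unit variables) and verifying the uniform lower bound $\frac{19}{27}$ on the satisfaction probability in every case, while being careful that Assumption \ref{assump1} and $3$-satisfiability really do rule out the configurations that would otherwise drag the probability down to exactly $\frac23$.
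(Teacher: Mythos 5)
Your overall architecture is the paper's: a biased independent random assignment, a clause-by-clause expectation bound (hard clauses at probability at least $\frac23$, soft clauses at least $\frac{19}{27}$), and derandomization by conditional expectations. But there is a concrete error in the key parameter. You bias the unit variables with the golden-ratio constant $\pp\approx 0.618$ (the root of $p=1-p^2$), which is the constant for the \emph{$2$-satisfiable} bound; Yannakakis's argument for $3$-satisfiable formulas, and the paper, set each $x\in\unitvar$ to \textsc{true} with probability $\frac23$. With $p=\pp$ your first case already fails: a unit clause $\{x\}$ is satisfied with probability $\pp<\frac23$ (your parenthetical claim ``$p>2/3$'' is numerically false), so the clause-wise bound $\Prob[\tau\models C]\ge\frac23$ does not hold on $F_1$. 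Concretely, for a formula consisting only of positive unit clauses (so $\soft=\emptyset$), your distribution gives $\Exp[\sat_\alpha(F)]=\pp\, w(F)<\frac23 w(F)$, and then the method of conditional expectations cannot certify the claimed target; the expectation inequality you need is simply false for your distribution.

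The fix is to take $p=\frac23$, after which your outline becomes exactly the paper's proof: unit clauses are satisfied with probability exactly $\frac23$; clauses of $F_2$ (of the form $\{\bar x,y\}$ or $\{\bar x,\bar y\}$ with $x\in\unitvar$, $y\notin\unitvar$) with probability $1-\frac23\cdot\frac12=\frac23$; and for soft clauses the case analysis gives at least $\frac{19}{27}$, namely: a two-literal clause on two unit variables cannot be $\{\bar x_1,\bar x_2\}$ by $3$-satisfiability, so its probability is at least $1-\frac13\cdot\frac23=\frac79$; a two-literal clause with exactly one unit variable must contain that variable positively (otherwise it would lie in $F_2$ and be hard), giving $1-\frac13\cdot\frac12=\frac56$; a two-literal clause with no unit variable gives $\frac34$; and any clause of length at least $3$ gives at least $1-\left(\frac23\right)^3=\frac{19}{27}$, which is the binding case (not the mixed-sign two-unit-variable clause you singled out). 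Summing and derandomizing as you describe then yields $\sat_\tau(F)\ge\frac23 w(F)+\frac1{27}w(\soft)$.
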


\begin{proof}
We will construct a random truth assignment $\alpha$ such that $\mathbb{E}({\rm sat}_\alpha(F)) \ge \frac{2}{3}w(F) + \frac{1}{27} w(F_s)$.
This implies that there exists an assignment which satisfies clauses of total weight at least $\frac{2}{3}w(F) + \frac{1}{27} w(F_s)$; we can find such an assignment
in polynomial time using the well-known method of conditional expectations, see, e.g., \cite{alon}.

We define a random truth assignment $\alpha$ as follows. For $x \in \unitvar$, we let $\alpha(x)$ be $\textsc{true}$ with probability $\frac{2}{3}$. For $y \in V(F) \backslash \unitvar$, we let $\alpha(y)$ be $\textsc{true}$ with probability $\frac{1}{2}$. The values are assigned to the variables independently from each other.

Let $C$ be a clause and let $\alpha$ be the random truth assignment above. We will now bound $\mathbb{E}({\rm sat}_\alpha(C))$.
We first consider a hard clause $C$ and, to simplify notation, assume that $w(C)=1$.  By Assumption \ref{assump1}, we have the following cases.

\begin{description}
\item[$C = \{x\}$]: In this case the probability that $C$ is satisfied is exactly $\frac{2}{3}$ and, thus, $\mathbb{E}({\rm sat}_\alpha(C))=\frac{2}{3}.$
\item[$C = \{\bar{x}, y\}$ or $C = \{\bar{x}, \bar{y}\}$ for $x \in \unitvar, y \notin \unitvar$]: Then $\mathbb{E}({\rm sat}_\alpha(C)) = 1 - \frac{2}{3}\times \frac{1}{2} = \frac{2}{3}$.
\end{description}

Thus, for every hard clause $C$  with $w(C)\ge 1$, we have $\mathbb{E}({\rm sat}_\alpha(C))\ge \frac{2}{3}w(C)$. We will now consider
a non-hard clause $C$  and, to simplify notation, assume that $w(C)=1$. The following cases cover all possibilities.

\begin{description}
\item[$|C|=2$ and $|V(C) \cap \unitvar|= 2$]: Let $x_1,x_2\in V_1$.
Observe that $C = \{\bar{x}_1, \bar{x}_2\}$ is not in $F$ as $F$ is $3$-satisfiable and
we cannot satisfy the three clauses $\{x_1\}$, $\{x_2\}$ and $\{\bar{x}_1, \bar{x}_2\}$
simultaneously.
Therefore $\mathbb{E}({\rm sat}_\alpha(C)) \ge 1 - \frac{1}{3}\times
\frac{2}{3} = \frac{7}{9}$.
\item[$|C|=2$ and $|V(C) \cap \unitvar|= 1$]: Then $\mathbb{E}({\rm
sat}_\alpha(C)) = 1 - \frac{1}{3}\times \frac{1}{2} = \frac{5}{6}$.
\item[$|C|=2$ and $|V(C) \cap \unitvar|= 0$]: Then $\mathbb{E}({\rm sat}_\alpha(C)) = 1 - \frac{1}{2}\times \frac{1}{2} = \frac{3}{4}$.
\item[$|C|\ge 3$]: Since for each literal the probability of it being assigned $\textsc{false}$ is at most $\frac{2}{3}$, we have
$\mathbb{E}({\rm sat}_\alpha(C)) \ge 1- (\frac{2}{3})^3 = \frac{19}{27}$.
\end{description}

Thus, for every non-hard clause $C$  with weight $w(C)$, we have $\mathbb{E}({\rm sat}_\alpha(C))\ge \frac{19}{27}w(C)$.
Therefore,
$$\mathbb{E}({\rm sat}_\alpha(F)) \ge \frac{2}{3} w(\hard) + \frac{19}{27} w(F_s) = \frac{2}{3} w(F) + \frac{1}{27} w(F_s).$$
\end{proof}
%
%


Now let $F$ be a fat expanding $3$-satisfiable CNF formula.
We can efficiently find an assignment $\tau$ such that

\begin{eqnarray*}
 {\rm sat}_{\tau}(F) & \ge & \frac{2}{3}w(F) + \frac{1}{27} w(\soft) \mbox{    (by Lemma \ref{nonhard})}\\
   & = &	\frac{2}{3}w(F)+ \frac{133}{27 \cdot 151}w(\soft) + \frac{18}{27 \cdot 151}w(\soft) \\
  & \ge &		\frac{2}{3}w(F)+ \frac{133 \cdot 18}{27 \cdot 151 \cdot 133}(\hardvars) + \frac{18}{27 \cdot 151}|\softvar| \\
  & & \mbox{  (by definitions of fat clauses and an expanding formula)}\\
  & = &		\frac{2}{3}w(F)+ \frac{18}{27 \cdot 151}(\hardvars + |\softvar|) \\
	& = &	\frac{2}{3}w(F)+ \frac{2}{453}|V|.
\end{eqnarray*}

This completes the proof of Proposition \ref{fat_bound}.

\section{Proof of Proposition \ref{lean_bound}} \label{main_proof}

%
%
%
%
%

We will prove Proposition \ref{lean_bound} in the end of this section using Lemma \ref{theorem-lean}.
Lemma \ref{theorem-lean} will be shown using the next two lemmas.
We prove Lemma \ref{lem:nbound} first, as it is somewhat simpler. Note that
the assignments whose existence is claimed in the lemmas can be found efficiently using
the method of conditional expectations mentioned above.

In what follows, assume $F$ is a hard formula, and
let $n_1 := |\unitvar|$, $n_2 := |\twovar|$ and write
$\unitvar = \{x_1,\dots,x_{n_1}\}$ and $\twovar = \{y_1,\dots,y_{n_2}\}$.
For a possibly partial truth assignment $\alpha$ and a formula $F$, we
denote by $\sat_\alpha(F)$ the total weight of the clauses of $F$ that are
satisfied by $\alpha$. Similarly, $\unsat_\alpha(F)$ is the total weight of
the unsatisfied clauses.  Clearly $w(F) = \sat_\alpha(F) +
\unsat_\alpha(F)$. Recall that $F$ consists only of hard clauses, so
$F = \hard = F_1 \cup F_2$. All clauses in $F_1$ are of the
form $\{x_i\}$
(by Assumption~\ref{assump1}), and every clause in
$F_2$ is  of the form either $\{\bar{x}_i, y_j\}$ or
$\{\bar{x}_i, \bar{y}_j\}$.


\begin{lemma}\label{lem:nbound}
  There is an assignment satisfying at least a total weight of
  $\frac{2}{3} w(F) + \frac{2}{9} n_2$.
  \label{lemma-n}
\end{lemma}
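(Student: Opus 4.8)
The plan is to run a variant of Yannakakis's random‑assignment argument in which the non‑unit variables are chosen greedily rather than at random. Concretely, I would let $\alpha$ set each unit variable $x_i\in \unitvar$ to \textsc{true} independently with probability $\tfrac23$, and then, \emph{conditioned on the values of the $x_i$'s}, set each $y_j\in\twovar$ to whichever truth value satisfies the larger total weight of clauses of $F_2$ containing $y_j$. Since $F$ is hard we have $V(F)=\unitvar\cup\twovar$ and every clause of $F_2$ has the form $\{\bar x_i,y_j\}$ or $\{\bar x_i,\bar y_j\}$ with a single variable $y_j\in\twovar$, so this rule is well defined and $\sat_\alpha(F)$ is a (deterministic) function of the random choices on $\unitvar$. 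The unit clauses contribute exactly $\tfrac23 w(F_1)$ in expectation, so the whole proof reduces to showing that the clauses of $F_2$ contribute at least $\tfrac23 w(F_2)+\tfrac29 n_2$ in expectation; the method of conditional expectations over the $x_i$'s then yields an actual assignment.

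For a fixed $y_j\in\twovar$, let $a_j$ (resp.\ $b_j$) be the total weight of the clauses $\{\bar x_i,y_j\}$ (resp.\ $\{\bar x_i,\bar y_j\}$) of $F_2$, so $\sum_j(a_j+b_j)=w(F_2)$, and let $A_j$ (resp.\ $B_j$) be the random total weight of those among them whose variable $x_i$ has been set \textsc{true}. Given the $x_i$'s, setting $y_j$ optimally satisfies clauses of total weight exactly $a_j+b_j-\min(A_j,B_j)$ among the clauses containing $y_j$, so it suffices to prove $\tfrac13(a_j+b_j)-\Exp[\min(A_j,B_j)]\ge\tfrac29$ for each $j$. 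Here I would invoke the key structural fact: since $\{x_i\}\in F_1$ for every unit variable $x_i$, $3$-satisfiability forbids the triplet $\{x_i\},\{\bar x_i,y_j\},\{\bar x_i,\bar y_j\}$, so no $x_i$ occurs in both a clause $\{\bar x_i,y_j\}$ and a clause $\{\bar x_i,\bar y_j\}$; hence $A_j$ and $B_j$ depend on disjoint sets of variables and are \emph{independent}. Note $\Exp[A_j]=\tfrac23 a_j$ and $\Exp[B_j]=\tfrac23 b_j$.

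Now split on whether $a_j=b_j$. If $a_j\ne b_j$, say $b_j\ge a_j+1$ (weights are positive integers), then already $\Exp[\min(A_j,B_j)]\le\Exp[A_j]=\tfrac23 a_j$ gives a surplus $\tfrac13(a_j+b_j)-\tfrac23 a_j=\tfrac13(b_j-a_j)\ge\tfrac13>\tfrac29$. If $a_j=b_j=s\ge1$, then using independence and $\min(A_j,B_j)=A_j+B_j-\max(A_j,B_j)$ together with $\Exp[\max(A_j,B_j)]=\tfrac12(\Exp A_j+\Exp B_j)+\tfrac12\Exp|A_j-B_j|$ rewrites the surplus as exactly $\tfrac12\Exp|A_j-B_j|$, so it is enough to show $\Exp|A_j-B_j|\ge\tfrac49$. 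Since $A_j,B_j$ are integer‑valued and independent, $\Exp|A_j-B_j|\ge\Prob[A_j\ne B_j]=1-\sum_k\Prob[A_j=k]\Prob[B_j=k]\ge 1-\min(\|A_j\|_\infty,\|B_j\|_\infty)$, where $\|X\|_\infty$ is the largest point mass of $X$. A short computation shows that a sum of independent terms $w\cdot\mathrm{Ber}(2/3)$ (with $w$ a positive integer) has largest point mass $2/3$ when there is one term and at most $4/9$ when there are at least two. Thus if at least one of $A_j,B_j$ is a sum of two or more terms we get $\Exp|A_j-B_j|\ge\tfrac59>\tfrac49$; in the remaining case $A_j=s\cdot\mathrm{Ber}(2/3)$ and $B_j=s\cdot\mathrm{Ber}(2/3)$ independently, so $\Prob[A_j\ne B_j]=\tfrac49$ and $|A_j-B_j|=s\ge1$ on that event, again giving $\Exp|A_j-B_j|\ge\tfrac49$.

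Summing the per‑$y_j$ surplus over the $n_2$ variables of $\twovar$ yields $\Exp[\sat_\alpha(F)]\ge\tfrac23 w(F_1)+\tfrac23 w(F_2)+\tfrac29 n_2=\tfrac23 w(F)+\tfrac29 n_2$, and conditional expectations finish the proof. The main obstacle is the case $a_j=b_j$: there the naive bounds on $\Exp[\min(A_j,B_j)]$ lose exactly the constant one needs, and one is forced to exploit both the independence of $A_j$ and $B_j$ (which genuinely requires the $3$-satisfiability structure — if some $x_i$ occurred in both a positive and a negative $y_j$-clause of equal weight the claimed bound would fail) and the integrality of $A_j,B_j$ in order to extract the $\tfrac29$.
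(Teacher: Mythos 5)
Your proof is correct and follows essentially the same route as the paper: a $2/3$-biased random assignment on $\unitvar$, an optimal completion on $\twovar$, independence of the positive and negative contributions at each $y_j$ forced by $3$-satisfiability, and a gain of $\frac{2}{9}$ per variable of $\twovar$ (your surplus $\frac13(a_j+b_j)-\Exp[\min(A_j,B_j)]$ is exactly the paper's $\frac12\Exp|Y_j^+-Y_j^-|$). The only difference is presentational: where the paper asserts "it is easy to see" that $Y_j^+-Y_j^-$ is nonzero with probability at least $\frac49$, you prove that anti-concentration step in detail via the case split $a_j\ne b_j$ versus $a_j=b_j$ and the largest-point-mass bound.
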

\begin{proof}

The main idea of the proof is as follows:
If we set each variable in $\unitvar$ to \textsc{true} with probability $2/3$, the formula $F$ reduces to a $1$-CNF formula $F'$ over $\twovar$. Consider a variable $y \in \twovar$. The weights of $\{y\}$ and $\{\bar{y}\}$ in $F'$ are now random variables taking integer values. Since $F$ cannot contain both $\{\bar{x}, y\}$ and $\{\bar{x}, \bar{y}\}$, those random variables are independent and thus, with a certain constant probability, differ by at least $1$.
Therefore, by setting variables in $\twovar$ optimally, rather than uniformly at random,
we can satisfy more than half the weight of $F'$ and so satisfy more than $\frac{2}{3}w(F)$ clauses (by weight) overall.

We now make this intuition formal.
Set each $x \in V_1$ independently to \textsc{true} with probability $2/3$,
and denote this partial assignment by $\alpha$.
Let $F_2^\alpha$ be the set of
clauses of $F_2$ that are not satisfied by $\alpha$. Thus
$w(F_2^\alpha) = \unsat_\alpha(F_2)$ and 
\begin{equation}\label{eq:unsat} \E_{\alpha} [w(F_2^\alpha)] = \frac{2}{3} w(F_2).\end{equation}



Before assigning values to $\twovar$, let us examine $F_2^\alpha$.
For each variable $y_j \in \twovar$ define the two
random variables
\begin{eqnarray}
  Y^+_j & := & w( \{\bar{x}_i,y_j\} \in F \ | \ \alpha(x_i)=\textsc{true},
  	1 \leq i \leq n_1\}) \\
  Y^-_j & := & w( \{\bar{x}_i,\bar{y}_j\} \in F \ | \ \alpha(x_i)=\textsc{true},
  	1 \leq i \leq n_1\}) \ .	
\end{eqnarray}
Note that $\sum_{j=1}^{n_2} (Y^+_j + Y^-_j) = w(F_2^\alpha)$. By setting $y_j$ to \textsc{true} with probability $1/2$, we could satisfy $(Y^+_j + Y^-_j)/2$. By setting $y_j$ optimally, we can satisfy $\max(Y^+_j, Y^-_j)$, which is possibly more.
In order to estimate the difference, consider the distribution of $Y^+_j - Y^-_j$.  Let $C_1,\dots,
C_\ell$ be the clauses of $F$ containing $y_j$ or $\bar{y}_j$. Define $a_i :=
w(C_i)$ if $y_j \in C_i$ and $a_i := -w(C_i)$ if $\bar{y}_j \in
C_i$. Then
\begin{eqnarray}
Y^+_j - Y^-_j = a_1 z_1 + a_2 z_2 + \dots + a_\ell z_\ell,
\label{eq-Y}
\end{eqnarray}
where $\ell \geq 1$ and the $z_i$ are independent Bernoulli variables
with expectation $2/3$.
To see that they are independent, observe that
for every variable
$x_i$, there is at most one clause $C$ containing both $x_i$ and $y_j$
as variables, namely at most one of $\{\bar{x}_i, y_j\}$ and
$\{\bar{x}_i,\bar{y}_j\}$, by $F$ being $3$-satisfiable. Therefore,
for the clauses containing $y_j$ or $\bar{y}_j$, the events that their
weights contribute to the sum in (\ref{eq-Y}) are independent. Since
$|a_i| \geq 1$ for $1 \leq i \leq \ell$, it is easy to see that with
probability at least $4/9$, the random variable $Y^+_j - Y^-_j$ is
non-zero (the case $\ell=2$, $a_1 = 1, a_2 = -1$ shows that this is
tight).  Therefore,
$$
\E[ |Y^+_j - Y^-_j| ] \geq \frac{4}{9} \ .
$$

We may now give a partial assignment $\beta: \{y_1,\dots,y_{n_2}\} \rightarrow \{\textsc{false},\textsc{true}\}$
based on $\alpha$.
After sampling $\alpha$, we do not sample $\beta$
randomly, but choose each $\beta(y_j)$ optimally: If $Y^+_j - Y^-_j
\geq 0$, set $y_j$ to \textsc{true}, if $Y^+_j - Y^-_j < 0$, set it to
\textsc{false}. Thus, we see that
\begin{eqnarray*}
\sat_\beta(F_2^\alpha) & = & \sum_{j=1}^{n_2} \max(Y^+_j, Y^-_j) \\
& = & \sum_{j=1}^{n_2} \frac{Y^+_j+ Y^-_j}{2} + \frac{|Y^+_j - Y^-_j|}{2}\\
& = & \frac{1}{2} w(F_2^\alpha) + \frac{1}{2}\sum_{j=1}^{n_2}|Y^+_j - Y^-_j| \ .
\end{eqnarray*}

Thus the expected weight of satisfied clauses is
\begin{eqnarray*}
\E_{\alpha} [\sat_{\alpha\beta}(F)] & = &
\E_\alpha[\sat_\alpha(F_1)] + \E_\alpha[\sat_\alpha(F_2)]
+ \E_\alpha\left[\sat_{\beta} (F_2^\alpha)\right] \nonumber \\
& = &
\frac{2}{3}w(F_1) + \frac{1}{3} w(F_2) +
\E_\alpha\left[\frac{1}{2}w(F_2^\alpha) +
\frac{1}{2}\sum_{j=1}^{n_2}|Y^+_j - Y^-_j|\right]
\nonumber \\
& = &
\frac{2}{3}w(F) + \frac{1}{2} \sum_{j=1}^{n_2}\E[|Y^+_j - Y^-_j|] \mbox{    (by (\ref{eq:unsat}))}\\
& \geq &
\frac{2}{3}w(F) + \frac{1}{2} \sum_{j=1}^{n_2} \frac{4}{9}\\
& = & \frac{2}{3}w(F) + \frac{2}{9} n_2 \ .
\end{eqnarray*}
Thus, there is some assignment $\alpha\beta$ satisfying a weight
of at least $\frac{2}{3}w(F) + \frac{2}{9}n_2$.
\end{proof}

\begin{lemma}\label{lem:mbound}
  There is an assignment satisfying at least a total weight of
  $\frac{2}{3} w(F) + \frac{1}{6} n_1$.
  \label{lemma-m}
\end{lemma}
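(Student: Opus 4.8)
The plan is to mirror the proof of Lemma~\ref{lem:nbound}, but with the roles of the two variable classes reversed: first assign $\twovar$ uniformly at random, then assign $\unitvar$ optimally. Write $u_i := w(\{x_i\})$ for $x_i \in \unitvar$ (so $w(F_1) = \sum_i u_i$ and each $u_i \ge 1$), and let $W_i$ denote the total weight of clauses of $F_2$ containing $x_i$ (so $\sum_i W_i = w(F_2)$). Sample $\gamma : \twovar \to \{\textsc{true},\textsc{false}\}$ with each $y_j$ set independently to \textsc{true} with probability $1/2$. After substituting $\gamma$, each clause $\{\bar x_i, y_j\}$ or $\{\bar x_i, \bar y_j\}$ of $F_2$ is either already satisfied by $\gamma$ (let $T(\gamma)$ be the total such weight, so $\E[T(\gamma)] = \tfrac12 w(F_2)$) or collapses to the unit clause $\{\bar x_i\}$; let $R_i(\gamma)$ be the total weight collapsing onto $\{\bar x_i\}$, so that $\sum_i R_i(\gamma) = w(F_2) - T(\gamma)$. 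The residual formula on $\unitvar$ is then a disjoint union over $i$ of $\{x_i\}$ (weight $u_i$) and $\{\bar x_i\}$ (weight $R_i(\gamma)$), so setting each $x_i$ greedily satisfies $\max(u_i, R_i(\gamma)) = \tfrac12(u_i + R_i(\gamma)) + \tfrac12|u_i - R_i(\gamma)|$ of it. Summing and using $\E[T(\gamma)] = \tfrac12 w(F_2)$, one gets, for $\beta$ the greedy choice on $\unitvar$ given $\gamma$,
\[
\E_\gamma[\sat_{\gamma\beta}(F)] \;=\; \tfrac12 w(F) + \tfrac14 w(F_2) + \tfrac12\sum_{i=1}^{n_1}\E_\gamma\big[\,|u_i - R_i(\gamma)|\,\big].
\]

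The key estimate I will need is $\E[|u_i - R_i(\gamma)|] \ge 1/2$, and it should come from $3$-satisfiability. For a fixed $i$, the clauses of $F_2$ containing $x_i$ involve pairwise distinct $y$-variables: if $\{\bar x_i, y_j\}$ and $\{\bar x_i, \bar y_j\}$ were both present then, together with $\{x_i\} \in F_1$, they would form a forbidden triple. Hence $R_i(\gamma)$ is a sum of independent contributions $a_k z_k$, one per such clause, with $a_k \ge 1$ the integral clause weight and $z_k$ the indicator that $\gamma$ fails that clause, a Bernoulli$(1/2)$ variable. Conditioning on any single $z_k$ then shows $\Pr[R_i(\gamma) = c] \le \tfrac12$ for every integer $c$ (the two conditional values differ by $a_k \neq 0$), so $\Pr[R_i(\gamma) \neq u_i] \ge \tfrac12$; since $u_i$ and $R_i(\gamma)$ are integers, $\E[|u_i - R_i(\gamma)|] \ge \Pr[R_i(\gamma) \neq u_i] \ge \tfrac12$. (When $x_i$ lies in no clause of $F_2$, $R_i(\gamma) \equiv 0 \neq u_i$ and this is immediate.) I will also use the trivial bound $\E[|u_i - R_i(\gamma)|] \ge u_i - \E[R_i(\gamma)] = u_i - \tfrac12 W_i$.

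To finish it suffices to verify, for each $i$, that $6\,\E[|u_i - R_i(\gamma)|] + W_i \ge 2u_i + 2$; summing this over $i$ and substituting into the displayed identity gives $\E_\gamma[\sat_{\gamma\beta}(F)] \ge \tfrac23 w(F) + \tfrac16 n_1$, and the method of conditional expectations (as noted before the lemma) then yields such an assignment in polynomial time. If $W_i \ge 2u_i - 1$, then $6\cdot\tfrac12 + W_i \ge 2u_i + 2$. Otherwise $W_i \le 2u_i - 2$, and $6(u_i - \tfrac12 W_i) + W_i = 6u_i - 2W_i \ge 2u_i + 4 \ge 2u_i + 2$.

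I expect the main obstacle to be establishing the probabilistic fact $\Pr[R_i(\gamma) \neq u_i] \ge 1/2$ cleanly — in particular, deducing the independence of the indicators $z_k$ from $3$-satisfiability via the no-forbidden-triple argument, and remembering to treat the degenerate case where $R_i(\gamma)$ is an empty sum. Everything else is the bookkeeping leading to the displayed identity and the short two-case computation above.
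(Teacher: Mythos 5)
Your proposal is correct and follows essentially the same route as the paper: assign $V_2$ uniformly at random, set each $x_i$ optimally against the residual collapsed weight $R_i$ (the paper's $Z_i$), and use independence of the collapse indicators (forced by $3$-satisfiability, since $\{x_i\},\{\bar x_i,y_j\},\{\bar x_i,\bar y_j\}$ is forbidden) to get $\Pr[R_i\neq u_i]\ge \frac12$ and hence $\E[|u_i-R_i|]\ge \frac12$. The only difference is the final bookkeeping: the paper applies $\max(a,b)\ge \frac{2}{3}a+\frac{1}{3}b+\frac{1}{3}|a-b|$, which gives $\frac{2}{3}w(F)+\frac{1}{6}n_1$ in one line, whereas you use the exact identity $\max(a,b)=\frac12(a+b)+\frac12|a-b|$ and recover the same bound via your per-variable two-case argument with the extra estimate $\E[|u_i-R_i|]\ge u_i-\frac12 W_i$; both computations are valid.
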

\begin{proof}
This case is almost symmetric to the one above. In a first step,
we sample $\beta$ uniformly at random. Then, instead of
sampling $\alpha$ according to a Bernoulli distribution with
probability $2/3$, we again choose $\alpha$ optimally.

Set each $y \in V_2$ independently to \textsc{true} with probability $1/2$,
and denote this partial assignment by $\beta$.
Now let $F_2^\beta$ be the set of clauses of $F_2$ not satisfied by
$\beta$, and let $F^\beta = F_1 \cup F_2^\beta$. Note that 
\begin{equation}\label{eq:exp}\E_{\beta}[w(F_2^{\beta})]=\frac{1}{2}w(F_2).\end{equation} For $1 \leq i \leq n_1$,
define
$$
Z_i := w\left(\left\{ \{\bar{x}_i, y_j\} \in F_2 \ | \ \beta(y_j)=\textsc{false} \right\}
\cup  \left\{\{\bar{x}_i, \bar{y}_j\} \in F_2 \ | \ \beta(y_j)=\textsc{true} \right\}
\right)
$$
Then $Z_i$ are random variables depending on $\beta$. With this
notation, $w(F_2^\beta) = \sum_{i=1}^{n_1} Z_i$. Now we define $\alpha$ as
follows: If $w(\{x_i\}) \geq Z_i$, set $\alpha(x_i)=\textsc{true}$. If
$w(\{x_i\}) < Z_i$, set $\alpha(x_i)=\textsc{false}$. To bound the expected
satisfied weight, we use the inequality
$$
\max(a,b) \geq \frac{2}{3} a + \frac{1}{3}b + \frac{|a-b|}{3} \ .
$$
This is an equality if $a \geq b$. Now we have
\begin{eqnarray*}
 \E_{\alpha\beta} [\sat_{\alpha,\beta}(F)] & = & \frac{1}{2}w(F_2)
 + \E_{\beta}\left[\sum_{i=1}^{n_1} \max(w(\{x_i\}), Z_i)\right] \\
 & \geq & \frac{1}{2}w(F_2)
 + \E_{\beta}\left[\sum_{i=1}^{n_1} (\frac{2}{3}w(\{x_i\}) + \frac{1}{3}Z_i)
 + \frac{|w(\{x_i\}) -  Z_i|}{3}\right]  \\
 & = & \frac{2}{3} w(F) + \E_{\beta}\left[\sum_{i=1}^{n_1} \frac{|w(\{x_i\}) -
Z_i|}{3}\right] \mbox{    (by (\ref{eq:exp}))}
 \ .
\end{eqnarray*}
Similar to the previous case, observe that $Z_i$ can be written as
$$
b_1 z_1 + \dots + b_k z_k \ ,
$$ where $k \geq 0$, the $b_i$ are positive integers,
and the $z_k$ are independent Bernoulli variables with expectation
$1/2$. Therefore, one may observe that the $w(\{x_i\}) - Z_i$ is non-zero
with probability at least $1/2$. Therefore
\begin{eqnarray*}
   \E_{\alpha\beta} [\sat_{\alpha\beta}(F)] \geq
   \frac{2}{3} w(F) + \frac{1}{6} n_1 \ .
\end{eqnarray*}

Thus, there is some assignment $\alpha\beta$ satisfying a weight
of at least $\frac{2}{3}w(F) + \frac{1}{6}n_1$.
\end{proof}


%



\begin{lemma}\label{theorem-lean}
Let $F$ be a hard $3$-satisfiable formula. Then there is an assignment
satisfying a weight of at least $\frac{2}{3} w(F) + \frac{2}{21} |V(F)|$.
\end{lemma}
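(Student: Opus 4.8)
The plan is to combine Lemmas~\ref{lem:nbound} and~\ref{lem:mbound} by taking a suitable convex combination of the two guarantees, so that the weaker of the two terms $n_1$ and $n_2$ no longer matters. Recall that $F$ is hard, so $V(F) = V_1 \cup V_2$ and $|V(F)| = n_1 + n_2$; the two lemmas say we can satisfy weight at least $\frac{2}{3}w(F) + \frac{2}{9}n_2$ and at least $\frac{2}{3}w(F) + \frac{1}{6}n_1$, respectively.

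\begin{proof}
By Lemma~\ref{lem:nbound} there is an assignment of weight at least $\frac{2}{3}w(F) + \frac{2}{9}n_2$, and by Lemma~\ref{lem:mbound} there is an assignment of weight at least $\frac{2}{3}w(F) + \frac{1}{6}n_1$. Hence there is an assignment satisfying weight at least
$$
\frac{2}{3}w(F) + \max\left( \frac{2}{9}n_2,\ \frac{1}{6}n_1 \right) \geq \frac{2}{3}w(F) + \lambda \cdot \frac{2}{9}n_2 + (1-\lambda)\cdot \frac{1}{6}n_1
$$
for any $\lambda \in [0,1]$. Choosing $\lambda$ so that the coefficients of $n_1$ and $n_2$ are equal, i.e. $\lambda\cdot\frac{2}{9} = (1-\lambda)\cdot\frac{1}{6}$, gives $\lambda = \frac{3}{7}$ and common coefficient $\frac{2}{21}$. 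Since $F$ is hard, $|V(F)| = n_1 + n_2$, so this assignment satisfies weight at least $\frac{2}{3}w(F) + \frac{2}{21}(n_1+n_2) = \frac{2}{3}w(F) + \frac{2}{21}|V(F)|$. The better of the two assignments from Lemmas~\ref{lem:nbound} and~\ref{lem:mbound} can be found in polynomial time via the method of conditional expectations, as noted there.
\end{proof}

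There is essentially no obstacle here: the work has already been done in the two preceding lemmas, and the only content is the observation that $\max(a,b)$ dominates every convex combination of $a$ and $b$, together with the arithmetic of balancing the coefficients. The one point that deserves a sentence of care is that $|V(F)| = n_1 + n_2$ holds precisely because $F$ is hard (so $V_s = \emptyset$); if $F$ were merely $3$-satisfiable this step would fail, which is why the lemma is stated for hard formulas. I would double-check the balancing computation ($\frac{2}{9}\lambda = \frac{1}{6}(1-\lambda) \Rightarrow 12\lambda = 9(1-\lambda) \Rightarrow \lambda = 9/21 = 3/7$, common value $\frac{2}{9}\cdot\frac{3}{7} = \frac{6}{63} = \frac{2}{21}$) but otherwise present it as above.
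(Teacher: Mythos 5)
Your proof is correct and is essentially the paper's own argument: both combine Lemmas~\ref{lem:nbound} and~\ref{lem:mbound} by noting that the maximum dominates the convex combination with weights $\frac{12}{21}$ and $\frac{9}{21}$, yielding the common coefficient $\frac{2}{21}$ on $n_1+n_2=|V(F)|$. Your added remark that $|V(F)|=n_1+n_2$ relies on $F$ being hard is a correct and worthwhile clarification, but not a difference in approach.
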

\begin{proof}
  Let $n_1,n_2$ be as before. Using Lemmas \ref{lem:nbound} and \ref{lem:mbound}, we can satisfy a total weight of at
  least $\frac{2}{3}w(F) + \max (\frac{1}{6}n_1, \frac{2}{9}n_2)$.
  As the maximum is not smaller than any convex combination, we have
  $$
  \max (\frac{1}{6}n_1, \frac{2}{9}n_2)
  \geq \frac{12}{21} \cdot \frac{1}{6} n_1 + \frac{9}{21} \cdot \frac{2}{9} n_2
   = \frac{2}{21} (n_1+ n_2) \ .
   $$
\end{proof}






\medskip 

It remains to show how Proposition \ref{lean_bound} follows from Lemma \ref{theorem-lean}.
Let $F$ be an expanding $3$-satisfiable CNF formula which is not fat.
Observe that the subformula $F_h$ is a hard formula.
Then by Lemma \ref{theorem-lean} and the definition of an expanding formula,
we can efficiently find an assignment $\tau$ such that

\begin{eqnarray*}
 {\rm sat}_{\tau}(F) & \ge & \frac{2}{3} w(\hard) + \frac{2}{21}(\hardvars)\\
 & = & \frac{2}{3} w(F) - \frac{2}{3} w(\soft) + \frac{2}{21}(\hardvars)\\
& = & \frac{2}{3} w(F) - \frac{2}{3} w(\soft) + \frac{2\cdot 144}{21\cdot 151}(\hardvars) + \frac{2\cdot 7}{21\cdot 151}(\hardvars)\\
& \ge & \frac{2}{3} w(F) - \frac{2}{3} w(\soft) + \frac{2\cdot 144 \cdot 133}{21\cdot 151 \cdot 18}w(\soft) + \frac{14}{21\cdot 151}(\hardvars)\\
& = & \frac{2}{3} w(F) - \frac{14\cdot151}{21 \cdot 151} w(\soft) + \frac{2 \cdot 8 \cdot 133}{21\cdot 151}w(\soft) + \frac{14}{21\cdot 151}(\hardvars)\\
& \ge & \frac{2}{3} w(F) + \frac{(- 2114 + 2128)}{21\cdot 151}|\softvar| + \frac{2\cdot 7}{21\cdot 151}(\hardvars)\\
& = & \frac{2}{3} w(F) + \frac{14}{21 \cdot 151}|\softvar| + \frac{14}{21 \cdot 151}(\hardvars)\\
	& = &	\frac{2}{3} w(F) + \frac{2}{453}|V|.
\end{eqnarray*}

This completes the proof of Proposition \ref{lean_bound}.

\section{Discussion}\label{sec:d}

Let $r_t$ be the largest real such that in any $t$-satisfiable CNF formula at least $r_t$-th fraction of its clauses can be satisfied simultaneously. Note that $r_1=\frac{1}{2}$, $r_2=\frac{\sqrt{5}-1}{2}$ and $r_3=\frac{2}{3}.$ Kr{\'a}l \cite{Kral} established the value of $r_4$: $r_4=3/(5+(\frac{3\sqrt{69}-11}{2})^{1/3}-(\frac{3\sqrt{69}+11}{2})^{1/3})\approx 0.6992$.
For general $t$, Huang and Lieberherr \cite{HL85} showed that $\lim_{t\rightarrow \infty} r_t\le 3/4$ and Trevisan \cite{T1997} proved that $\lim_{t\rightarrow \infty} r_t=\frac{3}{4}$ (a different proof of this result was later given by Kr{\'a}l \cite{Kral}).

By definition, for each $t$-satisfiable CNF formula, we have ${\rm sat}(F)\ge r_tw(F)$. For $t=1,2$ this inequality was improved in \cite{CroGutJonYeo}
and for $t=3$ it was improved in this paper. It would be interesting to find a non-trivial improvement for ${\rm sat}(F)\ge r_tw(F)$ for each $t\ge 1$.

For any $t\ge 1$, a parameterized problem {\sc $t$-S-MaxSat-AE} can be defined as follows: given a $r$-satisfiable formula $F$, verify whether ${\rm sat}(F)\ge r_tw(F)+k$, whether $k$ is the parameter. For $t=1,2,3$, it has been shown that {\sc $t$-S-MaxSat-AE} has a kernel with a linear number of variables. It would be interesting to investigate whether this result can be extended to any $t.$

\paragraph{Acknowledgments}
Research of GG and MJ was partially supported by an International Joint grant of Royal Society.
DS acknowledges
  support from the Danish National Research Foundation and The
  National Science Foundation of China (under the grant 61061130540)
  for the Sino-Danish Center for the Theory of Interactive
  Computation, within which his part of this work was performed.

\end{document}